\documentclass[journal]{ieeeconf}

\usepackage{graphicx}      

\usepackage{latexsym}
\usepackage{amsmath}
\usepackage{amsfonts}
\usepackage{epstopdf}
\usepackage{amssymb}

\usepackage{microtype}
\newcommand\mycom[2]{\genfrac{}{}{0pt}{}{#1}{#2}}
\usepackage[english,italian]{babel}
\usepackage[utf8]{inputenc}
\usepackage[T1]{fontenc}

\usepackage{float}
\usepackage{soul}
\usepackage{xcolor}
\usepackage{comment}
\newtheorem{example}{Example}

\newtheorem{theorem}{Theorem}
\newtheorem{lemma}[theorem]{Lemma}

\newtheorem{remark}[theorem]{Remark }

\usepackage{color}
 
\usepackage{xcolor}
\usepackage[center]{caption}
\definecolor{verde}{rgb}{0,0.7,0}

\newcommand{\be}{\begin{equation}}
\newcommand{\ee}{\end{equation}}

\newcommand{\bea}{\begin{eqnarray}}
\newcommand{\eea}{\end{eqnarray}}
\newcommand{\bean}{\begin{eqnarray*}}
\newcommand{\eean}{\end{eqnarray*}}

\def\bfv{{\bf v}}

\def\bfx{{\bf x}}

\IEEEoverridecommandlockouts                              



\begin{document}
\title{\LARGE \bf
Consensus for Clusters of Agents with  Cooperative and Antagonistic Relationships}

 \author{Giulia De Pasquale and Maria Elena Valcher \thanks{G. De Pasquale and M.E. Valcher are with the Dipartimento di Ingegneria dell'Informazione Universit\`a di Padova, via Gradenigo 6B, 35131 Padova, Italy, e-mail:  \texttt{giulia.depasquale@phd.unipd.it, meme@dei.unipd.it}.}} 
\maketitle
\begin{abstract}             
In this paper we address  the consensus problem in the context of networked agents whose   communication graph   can be split into a   certain number of clusters in   such a way   that interactions between agents in the same clusters are cooperative, while   interactions between agents belonging to different clusters are antagonistic.  This problem  set-up arises in the context of social networks and opinion dynamics,   where reaching consensus means that the  opinions of the agents in the same cluster converge to the same decision. The consensus problem is  here investigated under the assumption that agents in the same cluster have the same constant and pre-fixed amount of trust (/distrust) to be distributed among their cooperators (/adversaries). The proposed    solution establishes how much agents in the same group must   be conservative about their opinions in order to converge to a common decision.
  \end{abstract}

\section{Introduction} 
Unmanned air vehicles,   sensor networks, opinion formation, mobile robots, and biological systems represent just a few examples of the wide variety of contexts    where  
distributed control, and in particular consensus and synchronization algorithms,  have been  largely employed  \cite{Lin2004,Ogren,ProskurnikovTempo,RenBeardAtkins}. 
The existence of such a broad area of   application has stimulated a rich literature addressing
consensus and synchronization problems for multi-agent networked systems under quite different assumptions on the  agents' description, their processing capabilities, the communication structure, the reliability of the communication network,  etc. \\
Most of the literature on consensus has focused on the problem of leading all the agents to a common decision, namely on  ensuring that the agents' describing variables asymptotically converge to the same value, by assuming that the  agents are cooperative. 
Social networks, however, provide  clear evidence that mutual relationships may not always be cooperative, and yet the dynamics of opinion forming may exhibit stable asymptotic patterns. In particular, Altafini \cite{Altafini2013} has   shown that  in a multi-agent  system with cooperative and antagonistic relationships,   {\em bipartite consensus}, namely the splitting of the agents' opinions into two groups that asymptotically converge to two opposite values, is possible provided that the communication network is {\em structurally balanced}, namely {\color{black} agents split into two groups such that intra-group relationships are cooperative  and inter-group relationships are antagonistic}. If this is not the case, then the only   equilibrium asymptotically achievable with a DeGroot's type of control law  is the   zero value.
This analysis has been later extended  from the case of simple integrators to the case
of homogeneous agents described by an arbitrary state-space model \cite{ValcherMisra} (see, also, \cite{Bauso2,Easley}), and has been in turn investigated by several other authors under different working conditions.\\
Group/cluster  consensus, namely the situation  that occurs when   agents split into {\color{black} an arbitrary number of} disjoint groups, and they aim to achieve consensus within each group, independently of the others,  represents the natural generalisation of the previous problems. It is interesting to remark 
that also this generalisation arose   in 
 the context of social networks. In fact,  some  opinion dynamics
models   \cite{HegselmannKrause} 
 highlighted how 
 agents with
limited confidence levels may evolve into different clusters, and the  members of each cluster
 eventually reach  a common opinion. \\
In particular,
  in \cite{XiaCao2011}  the concept of $n$-cluster synchronization is introduced for diffusively coupled networks. Three strategies to achieve  synchronization are presented. First the case of purely collaborative agents with different self-dynamics (informed agents and naive agents) is considered, and it is shown that under some homogeneity condition (on the sum of the weights of the edges connecting any agent of a cluster with the agents of another cluster), $n$-cluster synchronization is possible. Then the case when all  the agents are identical is considered, assuming again that the previous homogeneity constraint holds. It is shown how synchronization may be achieved by  suitably exploiting communication delays. 
  An alternative scenario is the one when cooperative and antagonistic relationships are possible.
  In that case, the authors assume that the sum of the weights of the edges connecting any agent of a cluster with the agents of another cluster is zero.  This means that for every agent the total weight of 
  the agents with whom it collaborates  and the total weight of 
  the agents with whom it  has conflicting relationships, in another cluster, coincide.
  Synchronisation is possible if and only if  the overall matrix representing the multi-agent system has $n$   {\color{black} zero  eigenvalues and all the other eigenvalues have negative real part}. 
  
It is interesting to notice that most of the literature on cluster synchronisation
has in fact adopted the same  ``in-degree balanced condition" adopted in     \cite{XiaCao2011}, namely the assumption that interactions within the same cluster are cooperative, while interactions between agents of different clusters may have both signs, but
every agent has a a perfect balance between collaborative relationships and antagonistic relationships in every other cluster. This is the case, for instance, of \cite{LiuChen2011,QinMaZhengGao,QinYu2013,QinYuAnderson2016,QinYuCDC2013,WuZhouChen2009,YuWang2010}.

In \cite{QinYu2013}  group consensus is investigated for homogeneous multi-agent systems each of them described by a stabilizable pair $(A,B)$, under the assumption that the communication  graph   admits an ``acyclic partition". Under some additional conditions  it is shown that it is always possible to design a state feedback law so that cluster synchronization is achieved.

In \cite{QinYuAnderson2016}   group consensus  for networked systems is investigated, by assuming that agents are modeled as  double integrators. Different set-ups, depending on whether agents' position and velocity are coupled according to the same topology or not, are explored.
In particular, the presence of a leader in each cluster is considered. It is shown that group consensus can be achieved if the
underlying topology for each cluster   satisfies certain connectivity assumptions and  
the intra-cluster weights are sufficiently high.
 Group consensus for networked multiple integrators is also considered  in \cite{QinYuCDC2013}, by assuming that   each cluster has a spanning tree and by introducing a scaling factor within each group to  make couplings within each cluster sufficiently strong.

Group consensus of a network of identical oscillators  to  a family of desired trajectories (one for each group) is investigated in  \cite{WuZhouChen2009}, by making use  of pinning control  and by adopting, as in 
\cite{QinYuCDC2013}, scaling factors (the so-called   ``coupling strengths") to enforce the coupling within each group, possibly in an adaptive way. Periodically intermittent pinning control is   adopted in \cite{LiuChen2011} to achieve cluster syncronization to a family of trajectories:
   in addition to the in-degree balanced condition, it is assumed that the communication graph restricted to each single cluster is strongly connected.

H$_\infty$ group consensus
 for  networks of agents modeled as  single-integrators, affected by
  model uncertainty and external disturbances, is investigated in \cite{QinMaZhengGao}. Conditions on the coupling strengths that ensure both the achievement of consensus within each group, and  an H$_\infty$ performance for the overall system in terms of disturbance rejection, are provided.
The ``in-degree balanced condition" is used also in 
 \cite{HanLuChen2015}, where it is referred to as ``inter-cluster common influence". 
Cluster consensus is here achieved by making use of external adaptive inputs, which are the same for     agents belonging to the same cluster.
 Group consensus for networked systems with switching topologies and communication delays is investigated  in \cite{YuWang2010}.
For a recent survey about consensus, including all results achieved for group/cluster consensus, see \cite{QinMaShiWang}.
\medskip

The aim of this paper is to address group consensus in a different set-up with respect to the one adopted in the previous references, a set-up that represents  an extension to the case of an arbitrary number of clusters of the one adopted in \cite{Altafini2013} for two groups.
Indeed, we assume that interactions between agents in the same clusters are cooperative, while interactions between agents belonging to different clusters  may only be antagonistic. This assumption turns out to be quite realistic in many applications from the economical, biological, sociological fields (see, e.g., \cite{Easley, Wasserman}) where activation or inhibition, cooperation or antagonistic interactions must be taken into account. Agents  in the same group cooperate with the aim of reaching a common objective, while they tend to compete with agents belonging to different groups or factions.  
Sociological  models were, in fact, the primary motivation behind the set-up adopted in \cite{Altafini2013}, and 
the proposed extension to an arbitrary number of clusters   explored in this paper is in perfect agreement with the common perspective adopted in    the tutorial paper  by Proskurnikov and Tempo \cite{ProskurnikovTempo} that focuses on the relation between social dynamics and multi-agent systems,   in the paper by  Cisneros-Velarde and  Bullo in \cite{Bullo2020}, and in the milestone paper by Davis \cite{davis67} where the concept of  clustering balance was introduced.

Clearly, in this context  the ``in-degree balanced assumption" adopted in \cite{LiuChen2011,QinYu2013,QinYuAnderson2016,QinYuCDC2013,XiaCao2011,YuWang2010} 
{\color{black} cannot be enforced without leading to a trivial set-up,}
 since inter-cluster weights can only be nonpositive.
However, we will adopt a similar, but weaker, homogeneity condition (in fact, similar to the one adopted in the first part of \cite{XiaCao2011} for the case of cooperative agents) that   requires that 
each agent in a group distributes the same amount of ``trust" to the agents in its own group and ``distrust" to the agents belonging to adverse groups. 
This is equivalent to saying 
that
given two arbitrary {\color{black} (not necessarily distinct)} classes, say $i$ and $j$, the  sum of the weights of  the incoming edges  from all the agents of class $j$ to an agent of class $i$ depends on $i$ and $j$, and not on the specific agent.
 
 More in detail, we assume that the communication graph between   agents is modeled by an undirected, signed, weighted, connected and clustered graph, and  that the agents are partitioned into $k$ clusters, such that intra-cluster interactions may only be nonnegative, while inter-cluster interactions can only be nonpositive. We
   investigate under what conditions  a revised version of the De Groot's distributed feedback control law, that only requires to modify the weight that each agent belonging to the same class has to   give to its own opinion, can lead the multi-agent system to $k$-{\em partite consensus}. Note that while the 
   {\color{black} typical} approach to group consensus   requires to enforce the intra-cluster communication (namely the weights of all the edges within a class) by suitably increasing the ``coupling strengths", in this case we only require that each agent of a cluster increases its level of stubborness or self-confidence, but this value must be the same for all the agents in the same group. 
   It is worthwhile remarking that, as in the previous papers about group consensus, the design of these coefficients cannot be obtained in a fully distributed way, since - as it will be shown in the following - the algorithm we propose requires that each cluster is aware of the choices made by the clusters preceding it,
   with respect to    some suitable ordering. However, once the parameters have been chosen the control algorithm is completely distributed.

This work generalizes the preliminary results presented in \cite{noi_cdc} for 
the  case of multi-agent systems partitioned into three groups.
The generalisation is not trivial at all, since 
it requires to extend the algorithm described in the proof from   three steps to an arbitrary number of steps.   Moreover, the assumptions under which the $k$-partite consensus problem is achieved have been generalised and better clarified. Finally, $k$-partite consensus is also investigated for   a special class of nonlinear models. 
%

The rest of the paper is organized as follows. In the following some definitions and basic properties  in the context of signed graphs are  introduced. Section \ref{2} formalizes the $k$-partite consensus problem for a multi-agent network, whose agents are described as simple integrators and whose communication  graph is split into $k$ clusters. Section \ref{3} 
provides some preliminary results about $k$-partite consensus. 
Section \ref{kpartitecij} provides a complete solution to this problem, under the aforementioned homogeneity assumption   that imposes that each agent in a group distributes the same amount of ``trust" to the agents in its own group and ``distrust" to the agents belonging to adverse groups. As a special case, we address the case of a complete graph. In Section \ref{5}  $k$-partite consensus for a class of nonlinear models is studied.
 Finally, Section \ref{6} concludes the work.
\medskip

 {\bf Notation}.\
Given   $k, n\in \mathbb{Z}$, with $k \le n$,   the symbol   $[k,n]$   denotes the  integer set  $\{k, k+1, \dots, n\}$.
In the sequel, the $(i,j)$-th entry of a matrix $A$ is denoted by $[A]_{i,j}$, while the $i$-th entry of a vector ${\bf v}$ by $[{\bf v}]_i$. 
{\color{black} Following \cite{BookFarina}, we adopt the following terminology and notation. Given 
a matrix $A$ with entries $[A]_{i,j}$ 
in $\mathbb{R}_+$, we say that $A$ is a  {\em nonnegative matrix}, if all its entries are nonnegative, namely $[A]_{i,j} \ge 0$ for every   $i,j$, and if so we use the notation $A \ge 0$. If all the entries of $A$ are positive,   then $A$ is said to be a  {\em strictly positive matrix}
 and we adopt the notation} $A \gg 0$.  The same notation holds for vectors.

A  symmetric matrix $P\in {\mathbb R}^{n\times n}$   is {\em positive (semi) definite} if ${\bf x}^\top P {\bf x} >0$ (${\bf x}^\top P {\bf x} \ge 0$) for every ${\bf x}\in {\mathbb R}^n, {\bf x}\ne0,$ and when so we use the {\color{black} notation} $P\succ 0$ ($P\succeq 0$).\\
 The notation $A= \text{diag}\{A_1, \dots, A_n\}$ indicates a block diagonal matrix whose diagonal blocks are $A_1, \dots, A_n$. 
 The symbols ${\bf 0}_n$ and  ${\bf 1}_{n}$ denote the 
   vectors in ${\mathbb R}^n$ with all entries equal to $0$ and to $1$, respectively. 
A  real square matrix $A$ is 
{\em Hurwitz} if all its eigenvalues  lie in the open left 
complex
halfplane, i.e. for every $\lambda$ belonging to  $\sigma(A)$, the {\em spectrum} of $A$, we have ${\rm Re}(\lambda)<0$.

For  $n\ge 2$,  an $n \times n$ nonzero   matrix $A$ is
 {\em reducible} \cite{F12b,Minc} if 
 there exists   a  permutation matrix $\Pi$ such that 
$$\Pi^{\top} A \Pi = \left[\begin{matrix}A_{1,1} & A_{1,2} \cr 0 & A_{2,2} \end{matrix} \right],$$
where $A_{1,1}$ and $A_{2,2}$ are square (nonvacuous) matrices, otherwise it is {\em
irreducible}.   
A {\em Metzler matrix} is a real square matrix, whose off-diagonal entries are nonnegative.
 If $A$ is an $n \times n$ Metzler matrix,  then    \cite{SonHinrichsen} 
it exhibits a real dominant   (not necessarily simple) eigenvalue, known as {\em Frobenius eigenvalue} and denoted
 by   $\lambda_{F}(A)$. This means that  $\lambda_{F}(A) > {\rm Re}(\lambda), \forall\ \lambda \in \sigma(A), \lambda \ne \lambda_{F}(A)$.
  If $A$ is Metzler and irreducible, then $\lambda_{F}(A)$ is necessarily simple.

An {\em undirected, signed and weighted  graph} is a triple  \cite{Mohar} $\mathcal{G}=(\mathcal{V},\mathcal{E},{\mathcal A})$, where $\mathcal{V}=\{1,\dots,N\}=[1,N]$ is the set of vertices, $\mathcal{E}\subseteq\mathcal{V}\times\mathcal{V}$  the set of arcs, and
${\mathcal A}\in{\mathbb R}^{N\times N}$  the  {\em adjacency matrix} of the weighted graph $\mathcal{G}$. An arc $(j,i)$  belongs to ${\mathcal E}$ if and only if $[{\mathcal A}]_{i,j} 
\ne 0$.
As the graph is undirected,
 $(i,j)$ belongs to ${\mathcal E}$ if and only if $(j,i)\in {\mathcal E}$, or, equivalently,   ${\mathcal A}$ is a symmetric matrix.
We assume that the graph ${\mathcal G}$ has no self-loops, i.e., $[{\mathcal A}]_{i,i}=0$ for every   $i\in [1,N]$, and   arcs in ${\mathcal E}$   have either positive or negative weights, namely the off-diagonal entries of ${\mathcal A}$   are either positive or negative. 
 If all the nonzero weights take values in $\{-1,1\}$, we call the graph {\em unweighted}. We say that two vertices $i$ and $j$ are {\em  friends} ({\em  enemies}) if there is a direct edge with positive (negative) weight connecting them.

A sequence 
 ${j_1}
\leftrightarrow {j_2} \leftrightarrow  {j_3}  \leftrightarrow \dots  \leftrightarrow {j_{k}} \leftrightarrow {j_{k+1}}$
is a {\em path}  
 of length $k$ 
 connecting ${j_1}$ and ${j_{k+1}}$
provided that
$({j_1},{j_2}), ({j_2},{j_3}),\dots,$ $({j_{k}}, {j_{k+1}}) \in {\mathcal E}$. 
A graph is said to be {\em  connected} if for every pair of distinct vertices $i, j\in [1,N]$ there is a path connecting ${j}$ and ${i}$. This
  is
 equivalent to the fact that   the adjacency matrix ${\mathcal A}$ is     irreducible.
%
 
The graph $\mathcal{G}$ is said to be {\em complete} if, for every pair of nodes $(i,j)$, $i \neq j$, $i,j \in \mathcal{V}$, there is an edge connecting them, namely $(i,j) \in \mathcal{E}$.  
Also, ${\mathcal G}$ has a (nontrivial)  {\em clustering} \cite{davis67} if it has at least one negative edge and the set of vertices ${\mathcal V}$ can be partitioned into say $k\ge 2$ disjoint subsets ${\mathcal V}_1, \dots, {\mathcal V}_k$ such that for every $i,j\in {\mathcal V}_p, p\in [1,k],$ $[{\mathcal A}]_{i,j}\ge 0$, while for every $i\in {\mathcal V}_p, j\in {\mathcal V}_q$, $p,q\in [1,k]$, $p\ne q$, $[{\mathcal A}]_{i,j}\le 0$.
\\
Two vertices $i$ and $j$ are {\em  familiar}   if they belong to the same connected component of the same cluster, namely $i,j\in {\mathcal V}_h$ for some  $h\in [1,k]$ and there exists a path (with all positive weights) from $i$  to $j$ passing only through vertices of 
${\mathcal V}_h$.

\section{{\textit k}-partite consensus: Problem statement}\label{2}

We consider a multi-agent system consisting of $N$ agents, each of them described as a continuous-time integrator  (see \cite{Altafini2013,OlfatiFaxMurray,OF-Murray2004,RenBeardAtkins,RenBeardMcLain}). The overall system dynamics is described as
\be 
\dot{\bf x}(t) = {\bf u}(t), 
\label{model}
\ee
where ${\bf x}\in \mathbb{R}^{N}$ and  ${\bf u}\in \mathbb{R}^{N}$ are the state and input variables, respectively.
\\
{\bf Assumption 1 on the communication structure.}\ [Connectedness and clustering]\ The communication among the $N$ agents is described by an   undirected, signed and  weighted communication graph ${\mathcal G}= ({\mathcal V}, {\mathcal E}, {\mathcal A})$, where ${\mathcal V}= [1,N]$ is the set of vertices, ${\mathcal E}\subseteq {\mathcal V} \times {\mathcal V}$ is the set of arcs, and ${\mathcal A}$ is the adjacency matrix of ${\mathcal G}$ that mirrors how agents interact. The 
 $(i,j)$-th entry of ${\mathcal A}$, $[{\mathcal A}]_{i,j}$, $i\ne j$, is nonzero if and only if 
 the information about the status of the $j$-th agent is available to the $i$-th agent.   
We assume that the interactions between pairs of agents are symmetric and hence ${\mathcal A}={\mathcal A}^\top$.
The interaction between the $i$-th and the $j$-th agents is cooperative if $[{\mathcal A}]_{i,j} > 0$ and antagonistic if $[{\mathcal A}]_{i,j} <0$. 
Also,  $[ {\mathcal A}]_{i,i} =0$ for all $i\in [1,N]$. 
We also assume that the  graph ${\mathcal G}$ is  connected and    {\em all the agents are grouped in $k\ge 3$  clusters, ${\mathcal V}_i, i\in [1,k],$ with $n_i=|{\mathcal V}_i|$.}

The aim of this paper is to propose an  extension  to the case of $k$ clusters of the results reported in \cite{Altafini2013} for {\em structurally balanced graphs}, namely  graphs with two clusters, by  proposing conditions under which
 agents in the same cluster ${\mathcal V}_i, i\in[1,k],$
reach {\em consensus}. In other words, we investigate  conditions ensuring that the state variables of the agents belonging to the same cluster asymptotically converge to the same value:
$$\lim_{t\rightarrow +\infty}x_k(t)= c_i, \qquad \forall\ k\in {\mathcal V}_i,\ \forall\ i\in [1,k],$$
 independently of their initial values. 
\medskip

When dealing with multi-agent systems with cooperative and antagonistic relationships, one can use
the  DeGroot's type distributed feedback control law \cite{Altafini2013,RenBeardAtkins,XiaCaoJohansson}: 
$$
u_i(t) =  -  \sum_{j:(j,i)\in {\mathcal E}} |[{\mathcal A}]_{i,j}| \cdot [x_i(t) - {\rm sign}([{\mathcal A}]_{i,j})x_j(t)],
$$
$i\in[1,N],$   with ${\rm sign}(\cdot)$  the sign function, that corresponds, in aggregated form, to
\be
{\bf u}(t) = -{\bf {\mathcal{L}x}}(t), 
\label{DeGroot}
\ee
where ${\mathcal L}$ is the   {\em Laplacian matrix}  associated with the adjacency matrix ${\mathcal A}$,
defined\footnote{Note that this definition is different from the one  adopted in most of the references cited in the Introduction.} as     \cite{Altafini2013,HouLiPan,Kunegis}:
\be
{\mathcal L} := {\mathcal C} - {\mathcal A},
\label{laplacian}
\ee
where ${\mathcal C}$ is the (diagonal)  connectivity matrix, whose diagonal entries are 
{\color{black} $[{\mathcal C}]_{ii} = \sum_{h:(h,i)\in {\mathcal E}} |[{\mathcal A}]_{ih}|,  \forall i\in [1,N].$}
In other words
\be
[{\mathcal L}]_{ij} = \begin{cases}\sum_{h:(h,i)\in {\mathcal E}} |[{\mathcal A}]_{i,h}|, & {\rm if }\ i=j;\cr
- [{\mathcal A}]_{i,j}, & {\rm if }\ i \ne j.
\end{cases}
\label{laplacian2}
\ee
As  shown in \cite{Altafini2013}, however, this control law leads to an autonomous  multi-agent system 
$$\dot{\bf x}(t)= - {\mathcal L} {\bf x}(t),$$
that may achieve a nontrivial consensus only if the underlying communication graph is structurally balanced. This immediately implies that if the agents can be partitioned into $k\ge 3$ clusters, but not into a smaller number of clusters, then the only possible consensus   is the one to the zero value.  This also means that, in the current set-up,   a purely distributed approach in which each agent uses as information only the weights
it attributes to the information received by its neighbouring agents, whether they are allies or enemies, cannot lead to consensus if not in a trivial form.
So, in this paper we investigate how to modify the distributed control law \eqref{DeGroot}, to achieve consensus when the communication graph is connected and signed, but the agents split into $k\ge 3$ disjoint groups. \\
For the sake of simplicity, in the following we will assume that the agents are ordered in such a way that the agents belonging to the cluster  ${\mathcal V}_1$ are the first $n_1$,  the agents in the cluster ${\mathcal V}_2$ are the subsequent $n_2$... and the agents in the cluster ${\mathcal V}_k$ are the last $n_k$.  This assumption entails no loss of generality, since it is always possible to reduce ourselves to this structure by means of  a relabelling of the nodes/agents. Clearly, $n_1+n_2+\dots + n_k=N$.
 Accordingly, the adjacency matrix of the graph $\mathcal{G}$ is block-partitioned as follows
\begin{equation}\label{adjacency_mk}
{\mathcal  A}=\left[
\begin{array}{cccc}
     {\mathcal A}_{1,1}& {\mathcal A}_{1,2} & \dots &{\mathcal A}_{1,k}\\
     {\mathcal A}_{2,1}& {\mathcal A}_{2,2} & \dots & {\mathcal A}_{2,k}\\
     \vdots &  \vdots & \ddots & \vdots \\
     {\mathcal A}_{k,1} & {\mathcal A}_{k,2} & \dots & {\mathcal A}_{k,k}
\end{array} \right]
\end{equation}
with ${\mathcal A}_{i,j} \in \mathbb{R}^{n_i \times n_j}$, ${\mathcal A}_{i,i} = {\mathcal A}_{i,i}^\top \geq 0$, $\forall i \in [1,k]$,  ${\mathcal A}_{i,j} \leq 0$ $\forall i \neq j$, $i,j \in [1,k]$, $[{\mathcal A}_{i,i}]_{\ell,\ell}=0$, $\forall i \in [1,k], \ell \in [1,n_i]$.
We
consider a distributed control law for the system \eqref{model} of the type
\begin{equation}\label{u}
{\bf u} = -{\bf {\mathcal{M}x}}, 
\end{equation}
where ${\bf {\mathcal{M}}} \in \mathbb{R}^{N\times N}$ takes the form 
\begin{equation}\label{M}
{\bf {\mathcal{M}}} = {\bf  \mathcal{D}-\mathcal{A}}, 
\end{equation}
with $\mathcal{A}$ the adjacency matrix of $\mathcal{G}$ and $\mathcal{D}\in \mathbb{R}^{N\times N}$ a   diagonal matrix that can be partitioned according to the block-partition of ${\mathcal A}$, namely
\be
{\mathcal D} = {\rm diag}\{ {\mathcal D}_1, {\mathcal D}_2, \dots, {\mathcal D}_k\},
\quad {\mathcal D}_i   \in {\mathbb R}^{n_i\times n_i},
\label{matriceD}
\ee
$n_i$ being the cardinality of the $i$-th cluster. 
{\color{black}
The overall multi-agent system is hence described as 
\be
\dot{\bf x}(t) = - {\mathcal M} {\bf x}(t),
\label{model_final}
\ee
and the aim of this paper is
to investigate if it is possible to choose the  diagonal matrices ${\mathcal D}_i$  so that 
all the agents reach {\em $k$-partite consensus}, by this meaning that for every     initial condition ${\bf x}(0)\in {\mathbb R}^{N\times N}$ (except  for a set of zero measure in ${\mathbb R}^{N}$)
all the state variables,  associated to agents in the same cluster, converge to the same value, namely
\be
\lim_{t\rightarrow +\infty} {\bf x}(t) = [c_1 {\bf 1}_{n_1}^\top,c_2 {\bf 1}_{n_2}^\top,  \dots, c_n {\bf 1}_{n_k}^\top]^\top, 
\label{conditionCk}
\ee for suitable $c_i
= c_i({\bf x}(0)) \in \mathbb{R}, i \in [1,k]$, not all of them equal to zero. \\
The  diagonal entries of the matrix ${\mathcal D}$  are henceforth our
design parameters. 
Each such entry $[{\mathcal D}_i]_j \in \mathbb{R}$ can be seen as the degree of ``stubbornness" of the $j$-th agent of the $i$-th cluster. It quantifies how much the $j$-th individual in the  cluster ${\mathcal V}_i$ is conservative about its opinion. 
Note that even if the proposed control scheme is not fully distributed, since the agents will not be able to autonomously decide the level of stubborness 
they have to adopt in order to guarantee that the final target is achieved, nonetheless the proposed modification  of the standard control law is minimal, since it only requires the agents to modify the weight that each of them  gives to its own opinion.  
Note that once the diagonal entries of ${\mathcal D}$ have been   set,   the remaining control algorithm is implemented in a purely distributed way.}

\section{{\textit k}-partite consensus: Preliminary results}\label{3}

In order to   provide a  solution to the $k$-partite consensus problem under certain assumptions on the communication graph, we first present a simple lemma that provides necessary and sufficient conditions for $k$-partite consensus. 
 The result is   elementary and extends the analogous result for consensus of cooperative multi-agent systems.  Also, it has similarities with Proposition 6 in \cite{YuWang2010} derived for cooperative networks. 
\smallskip

  \begin{lemma}\label{lemma1}  
  A multi-agent system \eqref{model}, whose communication graph ${\mathcal G}$ satisfies Assumption 1, adopting the distributed control law \eqref{u}, and hence described as in \eqref{model_final},  with ${\bf {\mathcal{M}}} \in \mathbb{R}^{N\times N}$    as in
\eqref{M},   ${\mathcal A}$    as in \eqref{adjacency_mk}, $\mathcal{D} = {\rm diag}\{{\mathcal D}_1, {\mathcal D}_2, \dots, {\mathcal D}_k\}\in \mathbb{R}^{N\times N}$  
and 
 ${\mathcal D}_i \in {\mathbb R}^{n_i\times n_i}$,   $i\in [1,k],$  diagonal matrices,
reaches $k$-partite consensus if and only if the following conditions hold:
\begin{itemize}
    \item[(1)] ${\bf {\mathcal{M}}}$ is a singular positive semidefinite matrix.
    \item[(2)] The kernel of ${\bf {\mathcal{M}}}$ is spanned by vectors of the type ${\bf z} = [\alpha_1 {\bf 1}_{n_1}^\top,\dots,  \alpha_k {\bf 1}_{n_k}^\top]^\top, \alpha_i \in \mathbb{R}, i \in [1,k]$.
  \end{itemize}
\end{lemma}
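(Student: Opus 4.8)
The plan is to solve the linear autonomous system \eqref{model_final} explicitly and translate the asymptotic requirement \eqref{conditionCk} into spectral conditions on ${\mathcal M}$. The structural fact I would exploit at the outset is that, since ${\mathcal A}={\mathcal A}^\top$ and ${\mathcal D}$ is diagonal, the matrix ${\mathcal M}={\mathcal D}-{\mathcal A}$ is symmetric; hence it is orthogonally diagonalizable with real eigenvalues, and its zero eigenvalue (if present) is automatically semisimple. Writing the solution of \eqref{model_final} as ${\bf x}(t)=e^{-{\mathcal M}t}{\bf x}(0)$ and using the spectral decomposition ${\mathcal M}=\sum_i \lambda_i P_i$, with $P_i$ the orthogonal projectors onto the distinct eigenspaces, I get $e^{-{\mathcal M}t}=\sum_i e^{-\lambda_i t}P_i$. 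This reduces the whole problem to understanding when this matrix exponential converges and what its limit is.

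For the sufficiency direction, assume (1) and (2). Since ${\mathcal M}\succeq 0$ is singular, all eigenvalues are nonnegative and $0\in\sigma({\mathcal M})$; therefore every term $e^{-\lambda_i t}P_i$ with $\lambda_i>0$ vanishes as $t\to+\infty$, and $e^{-{\mathcal M}t}\to P_0$, the orthogonal projector onto $\ker{\mathcal M}$. Consequently ${\bf x}(t)\to P_0{\bf x}(0)\in\ker{\mathcal M}$ for every initial state, and by (2) this limit has the $k$-partite form appearing in \eqref{conditionCk}. It only remains to check that the limit is nonzero for almost every ${\bf x}(0)$: since $P_0{\bf x}(0)=0$ exactly when ${\bf x}(0)\in(\ker{\mathcal M})^\perp$, a proper subspace of ${\mathbb R}^N$ of zero Lebesgue measure, the set of initial conditions yielding a nontrivial limit has full measure.

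For the necessity direction, assume $k$-partite consensus holds. First I would observe that the set of initial conditions for which $e^{-{\mathcal M}t}{\bf x}(0)$ converges is a linear subspace; being of full measure, it must coincide with ${\mathbb R}^N$, so $e^{-{\mathcal M}t}$ converges as a matrix. Because ${\mathcal M}$ is symmetric with real eigenvalues and linearly independent projectors $P_i$, convergence forces every eigenvalue to be nonnegative, i.e. ${\mathcal M}\succeq 0$; moreover nontriviality (the $c_i$ not all zero) rules out $e^{-{\mathcal M}t}\to 0$, so $0$ must be an eigenvalue and ${\mathcal M}$ is singular, giving (1). For (2) I would again use the limit $e^{-{\mathcal M}t}\to P_0$ and note that the set of ${\bf x}(0)$ whose limit $P_0{\bf x}(0)$ is $k$-partite is the linear subspace $\{{\bf x}(0): P_0{\bf x}(0)\in\mathcal{T}\}$, where $\mathcal{T}$ denotes the $k$-dimensional space of $k$-partite vectors $[\alpha_1{\bf 1}_{n_1}^\top,\dots,\alpha_k{\bf 1}_{n_k}^\top]^\top$; full measure forces this subspace to be all of ${\mathbb R}^N$, whence $\ker{\mathcal M}={\rm Im}(P_0)\subseteq\mathcal{T}$, which is precisely (2).

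The routine part is the spectral bookkeeping for symmetric matrices. The step requiring the most care — and the one I would treat as the main obstacle — is the passage from the ``almost every initial condition'' hypothesis to pointwise statements valid for all ${\bf x}(0)$: the crucial observation is that both the convergence set and the ``$k$-partite-limit'' set are linear subspaces, so a full-measure hypothesis upgrades each of them to the entire space. This linearity-plus-measure argument is exactly what lets me identify $\ker{\mathcal M}$ completely, rather than merely showing it contains some $k$-partite vectors.
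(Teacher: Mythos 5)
Your proposal is correct and follows essentially the same route as the paper: both arguments rest on the spectral decomposition of the symmetric matrix ${\mathcal M}$, identify the limit of $e^{-{\mathcal M}t}{\bf x}(0)$ with the orthogonal projection of ${\bf x}(0)$ onto $\ker {\mathcal M}$, and thereby reduce $k$-partite consensus to ${\mathcal M}$ being singular, positive semidefinite, and having its kernel contained in the space of block-constant vectors. If anything, your write-up is slightly more careful than the paper's on the measure-theoretic points --- upgrading the ``almost every ${\bf x}(0)$'' hypothesis to all of ${\mathbb R}^N$ via linearity of the convergence set and of the $k$-partite-limit set, and checking that the limit is nonzero outside the zero-measure subspace $(\ker{\mathcal M})^\perp$ --- steps the paper treats only implicitly.
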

\medskip

\begin{proof} [Sufficiency]\
If ${\mathcal{M}}$ is a singular positive semidefinite matrix, then the system $\dot{{\bf x}}=-{\bf \mathcal{M} x}$ is stable (but not asymptotically stable), and for   every ${\bf x}(0) \in {\mathbb R}^N$
\begin{equation}
    \lim_{t \rightarrow +\infty} {\bfx}(t) = \sum_{i=1}^{m} b_{i}{\bf v}_i,
\end{equation}
where $m$ is the dimension of the eigenspace associated with the (dominant) zero eigenvalue, $b_i\in \mathbb{R}$ are coefficients that depend on the initial conditions and ${\bfv}_i \in \mathbb{R}^{N}$ are the eigenvectors associated with the zero eigenvalue (and they   can always be chosen so that they represent a family of orthonormal vectors). 
By    condition (2), each ${\bf v}_i$ is  block-partitioned in $k$ blocks, conformably  with the clusters' dimensions, and hence $\sum_{i=1}^{m} b_{i}{\bf v}_i$ takes the  form
$[c_1 {\bf 1}_{n_1}^\top,\dots,  c_k {\bf 1}_{n_k}^\top]^\top$. \smallskip

\noindent [Necessity]\ 
If condition \eqref{conditionCk}
holds for (almost) every  ${\bf x}(0)$, then $0$ must be the dominant eigenvalue of the matrix $-{\bf {\mathcal{M}}}$, and hence, being a symmetric matrix, it follows that ${\bf {\mathcal{M}}}$ is   (singular and) positive semidefinite.
Moreover, as   condition \eqref{conditionCk}  has to hold for every ${\bf x}(0)$ that is an eigenvector of $-{\bf {\mathcal{M}}}$ corresponding to $0$, this   implies  condition (2).
\end{proof}

\begin{remark} By referring to the notation adopted within the proof of Lemma \ref{lemma1}, we can express the steady state value of the state variable ${\bf x}(t)$ as
\begin{equation}
{\bf x}^* = \lim_{t \rightarrow +\infty} {\bf x}(t) = \sum_{i=1}^{m} ({\bf v}_i^\top {\bf x}(0)) {\bf v}_i = \sum_{i=1}^{m }
b_i {\bf v}_i,
\end{equation}
where $m$ is the dimension of the eigenspace associated with the zero eigenvalue and ${\bf v}_i$ are the orthonormal eigenvectors associated with $0$. Note that left and right eigenvectors coincide, because ${\mathcal M}$ is a symmetric matrix.
\end{remark}

%
%


We now introduce some additional assumptions on the communication graph that will be used in  the following analysis,  and 
comment on their meaning.
\medskip

{\bf Assumption 2 on the communication structure.}\  [Homogeneity of trust/mistrust]\ All the  agents in a class ${\mathcal V}_i$ have the same constant and pre-fixed amount of trust to be distributed among their cooperators and distrust, specific for each class ${\mathcal V}_j, j\ne i$, to be distributed among the agents in antagonistic classes. This translates into assuming that the sums of the elements of the rows belonging to the same block assume the same value, namely 
for every $i,j \in [1,k],$ ${\mathcal A}_{i,j}{\bf 1}_{n_{j}} = c_{ij}{\bf 1}_{n_{i}}$, where
 $c_{ii}\geq 0$ and  $c_{ij}\le 0$, $\forall i \neq j$.  
  Note that even if the  adjacency matrix is symmetric, $c_{ij}$ may differ from $c_{ji}$.

\begin{example} \label{exampleAss2} Consider the  undirected, signed,   unweighted, connected   and clustered  communication graph, with $k=3$  clusters of cardinality $n_1=2$, $n_2=4$, $n_3=1$, and adjacency matrix
%
\begin{equation*}
\mathcal{A} =
 \left [\begin{array}{cc|cccc|c} 
0 &1& -1 & -1 & 0 & 0 & -1\\
1 & 0&  0 & 0 &-1 & -1& -1\\
\hline
- 1 & 0&  0 & 1 & 1 & 0 & -1\\
- 1 & 0&  1 & 0 & 0 & 1& -1\\
0 & -1 &  1 & 0 & 0 & 1& -1\\
0 & -1& 0 & 1 & 1 & 0 &  -1\\
\hline
- 1 & -1 &  -1 & -1 & -1 & -1 & 0
 \end{array}\right ]
\end{equation*}
It is easy to see that this graph satisfies both Assumption 1 and Assumption 2, and 
the parameters $c_{ij}$ are $c_{11} =1, c_{12}= -2, c_{13}=-1, c_{21}=-1, c_{22}=2, c_{23}=-1, c_{31}=-2, c_{32}=-4, c_{33}=0.$
\end{example}
 
\begin{remark}
Assumption 2 may be regarded as   a generalization of the concept of {\em equitable partition}, originally introduced in \cite{egerstedt2012interacting} for undirected, unweighted and unsigned graphs. In an equitably partitioned (unweighted, unsigned and undirected) graph, in fact, all the agents in the same cluster are restricted to have the same number of neighbours in every   {\color{black}  cluster}, i.e.  ${\mathcal A}_{i,j} {\bf 1}_{n_j}= c_{ij} {\bf 1}_{n_i}, \forall\ i, j \in [1,k],$
and 
each
$c_{ij}$ is a nonnegative integer number, representing the number of unitary entries in each row of ${\mathcal A}_{i,j}$.\\
Moreover,   this assumption is similar to the one introduced  in the first part of  \cite{XiaCao2011} 
dealing with cooperative multi-agent systems {\color{black}(see the Introduction)}, where it was assumed that 
  the blocks $\mathcal{A}_{ij}, i\ne j,$   have constant (and nonnegative) row sums.
\end{remark}

 {\bf Assumption 3 on the communication structure.}\  [Close friendship]\ 
There exist  $k-1$ distinct indices   $i_1, i_2, \dots, i_{k-1} \in [1,k]$ such that 
 every cluster ${\mathcal V}_h, h\in\{i_2, \dots, i_{k-1}\},$ either consists of a single node/agent or for every pair of distinct agents 
  $(i,j)\in {\mathcal V}_h\times {\mathcal V}_h$
either one of the following cases applies: 
\begin{itemize}
\item[i)]    $(i,j)$ are {\em friends} (the edge $(i,j)$ belongs to ${\mathcal E}$ and it has positive weight); 
\item[ii)]  $(i,j)$  are {\em enemies} of two (not necessarily distinct) vertices in ${\mathcal V}_{i_1}$
that are {\em familiar}  to each other.  This means that there exist $r,s \in {\mathcal V}_{i_1}$,  and belonging to the same connected component in ${\mathcal V}_{i_1}$,  such that the edges $(r,i)$ and $(j,s)$ belong to   ${\mathcal E}$ (and have negative weights).
\end{itemize}
\medskip

It is worthwhile to better illustrate  this graph property. Conditions i) and ii) amount to saying that either the vertices $i$ and $j$ of ${\mathcal V}_h$ are connected by an edge or there is a path connecting them  whose intermediate vertices are all in ${\mathcal V}_{i_1}$.
Figure \ref{assumption3} provides a graphical representation of this property.  The property holds for ${\mathcal V}_{i_h}$ and ${\mathcal V}_{i_{k-1}}$, but not for ${\mathcal V}_{i_{k}}$, the remaining set.
\medskip

 \begin{figure}[H]
     \begin{center}
     \centering
     \includegraphics[scale=0.3]{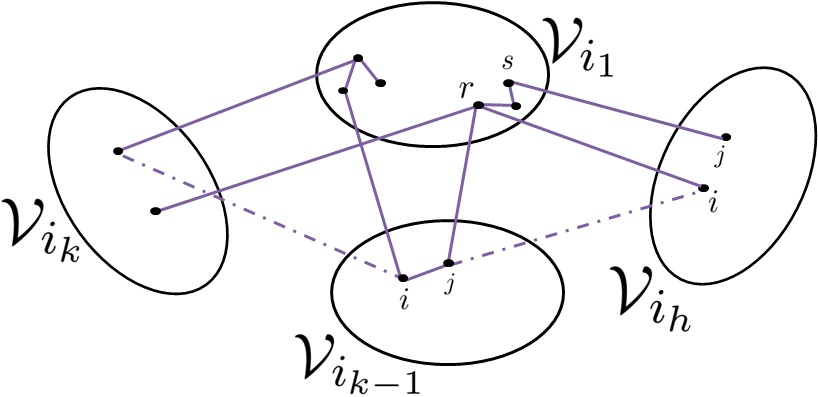}
     \caption{Graphical representation of Assumption 3.}
     \label{assumption3}
 \end{center}
 \end{figure}
\medskip

 The idea behind this assumption is that if two agents belong to the same clusters ${\mathcal V}_{h}, h\in \{i_2, i_3, \dots, i_{k-1}\},$ they have a close relationship: they are  either friends 
 or they are enemies of  agents belonging to the same group of friends in ${\mathcal V}_{i_1}$. 

From an algebraic point of view, Assumption 3 states that
for every $h\in \{i_2, i_3, \dots, i_{k-1}\}$ and 
for every $(i,j), i\ne j,$ with $i,j \in {\mathcal V}_h$ either $[{\mathcal A}_{h,h}]_{i,j} > 0$ or there exists $t\in {\mathbb Z}_+$ such that
$[{\mathcal A}_{h,i_1} {\mathcal A}_{i_1,i_1}^t {\mathcal A}_{i_1,h}]_{i,j} > 0$. 
As a consequence, for every diagonal matrix ${\mathcal D}_{i_1}$ such that ${\mathcal D}_{i_1} -{\mathcal A}_{i_1,i_1}$ is positive definite  (see Lemma \ref{lemma3} in the Appendix), and hence $({\mathcal D}_{i_1}- {\mathcal A}_{i_1,i_1})^{-1} \ge 0$, we have that
\be
[{\mathcal A}_{h,h}  + {\mathcal A}_{h,i_1} ({\mathcal D}_{i_1}- {\mathcal A}_{i_1,i_1})^{-1}  {\mathcal A}_{i_1,h}]_{i,j} > 0,
\quad \forall\ i\ne j.
\label{Ass3_mat}
\ee

 By referring to the previous Example \ref{exampleAss2}, it is easy to see that Assumption 3 trivially holds  for every choice of $i_1, i_2\in [1,3], i_1\ne i_2$. Note that ${\mathcal V}_3$ consists of a single node, while ${\mathcal V}_1$ and  ${\mathcal V}_2$ consist of a single connected component. Also, for every choice of 
 $i_1$ and $i_2$,   the restriction of the graph to the   clusters ${\mathcal V}_{i_1}$ and ${\mathcal V}_{i_2}$ is a connected graph.

\section{{\textit k}-partite consensus: Problem solution under the homogeneity constraint}\label{kpartitecij}

We are now in a position to prove that under the homogeneity constraint imposed by Assumption 2 and the close friendship hypothesis formalised in Assumption 3, we can always 
find  suitable choices of the diagonal matrices ${\mathcal D}_i, i\in [1,k],$ 
 that lead the multi-agent system, split into $k$ clusters, to $k$-partite consensus.
In particular, we will show that we can restrict our attention to scalar matrices and hence assume  that
${\mathcal D}_i = \delta_i I_{n_i},$
for some $\delta_i\in {\mathbb R}, i\in [1,k].$
This amounts to attributing to     agents in the same cluster   the same level of ``stubbornness" or ``self-confidence",
which is specific for each decision class. 
\medskip

\begin{theorem} \label{teok}
Consider the multi-agent system \eqref{model}, with undirected, signed,  weighted and connected communication graph
 $\mathcal{G}$ satisfying Assumptions 1, 2   and 3. Assume that the agents adopt
 the distributed control law \eqref{u}, with    ${\bf {\mathcal{M}}} \in \mathbb{R}^{N\times N}$  described as in
\eqref{M},   ${\mathcal A}$  described as in \eqref{adjacency_mk}, $\mathcal{D} = {\rm diag}\{{\mathcal D}_1, {\mathcal D}_2, \dots, {\mathcal D}_k\}\in \mathbb{R}^{N\times N}$  
and 
 ${\mathcal D}_i = \delta_i I_{n_i},$   $i\in [1,k].$\\
There exist $\delta_i\in {\mathbb R}, i\in [1,k]$, such that the closed-loop multi-agent system \eqref{model_final}
reaches $k$-partite consensus,   namely  \eqref{conditionCk}
holds for suitable $c_i
= c_i({\bf x}(0)) \in \mathbb{R}, i \in [1,k]$. 
\end{theorem}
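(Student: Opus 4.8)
The plan is to verify the two conditions of Lemma~\ref{lemma1} by exhibiting an explicit, constructive choice of the scalars $\delta_i$. The starting observation is that, because $\mathcal{M}=\mathcal{D}-\mathcal{A}$ is symmetric and Assumption~2 forces $\mathcal{A}_{i,j}{\bf 1}_{n_j}=c_{ij}{\bf 1}_{n_i}$, the subspace $V$ of block-constant vectors $[\alpha_1{\bf 1}_{n_1}^\top,\dots,\alpha_k{\bf 1}_{n_k}^\top]^\top$ is $\mathcal{M}$-invariant, and so is its orthogonal complement $V^\perp$ (the vectors whose entries sum to zero on every cluster). On $V$ the operator $\mathcal{M}$ is represented by a symmetric $k\times k$ matrix $\tilde{M}$ with diagonal $\delta_i-c_{ii}$ and fixed nonnegative off-diagonal entries $-c_{ij}\sqrt{n_i/n_j}$. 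Hence conditions (1)--(2) of Lemma~\ref{lemma1} are equivalent to: (A) $\mathcal{M}$ restricted to $V^\perp$ is positive definite, and (B) $\tilde{M}$ is positive semidefinite and singular. I would reduce the whole theorem to achieving (A) and (B) simultaneously.

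The engine is a sequential elimination along the ordering supplied by Assumption~3: first the hub $\mathcal{V}_{i_1}$, then the spokes $\mathcal{V}_{i_2},\dots,\mathcal{V}_{i_{k-1}}$, and finally the remaining cluster $\mathcal{V}_{i_k}$. First I would pick $\delta_{i_1}$ large enough that $\mathcal{D}_{i_1}-\mathcal{A}_{i_1,i_1}\succ 0$ with nonnegative inverse (Lemma~\ref{lemma3}), and take the Schur complement of $\mathcal{M}$ with respect to the $i_1$-block. Two facts must then be checked. \emph{Preservation}: the reduced matrix again has constant row sums on each block, i.e. Assumption~2 is inherited, so the associated consensus vector stays block-constant throughout. \emph{Connectivity upgrade}: by \eqref{Ass3_mat}, for every spoke $h$ the effective intra-cluster adjacency $\mathcal{A}_{h,h}+\mathcal{A}_{h,i_1}(\mathcal{D}_{i_1}-\mathcal{A}_{i_1,i_1})^{-1}\mathcal{A}_{i_1,h}$ becomes entrywise positive off the diagonal, hence irreducible. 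By the Perron--Frobenius property of irreducible nonnegative matrices, its constant row sum is a simple dominant eigenvalue with eigenvector ${\bf 1}_{n_h}$, so all of its remaining ``disagreement'' eigenvalues lie strictly below that row sum.

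I would then continue the elimination through the spokes, at each step choosing $\delta_{i_j}$ strictly above the current effective Frobenius eigenvalue; the gap just established guarantees that the corresponding pivot is positive definite on both the consensus and the disagreement directions, so the Schur complement is well defined and the eliminated contribution is positive definite. After all of $\mathcal{V}_{i_1},\dots,\mathcal{V}_{i_{k-1}}$ have been removed, one is left with a single Schur complement $S=\delta_{i_k}I_{n_{i_k}}-\mathcal{A}^{\mathrm{eff}}_{i_k,i_k}$ on the last cluster, where $\mathcal{A}^{\mathrm{eff}}_{i_k,i_k}$ has constant row sums and, by connectedness of $\mathcal{G}$ (Assumption~1) together with the standard fact that Schur complementation preserves irreducibility, is itself irreducible. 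The decisive choice is $\delta_{i_k}:=\lambda_F(\mathcal{A}^{\mathrm{eff}}_{i_k,i_k})$, equal to that common row sum: this makes $S$ positive semidefinite with a one-dimensional kernel spanned by ${\bf 1}_{n_{i_k}}$. Lifting this kernel vector back through the (block-structure-preserving) pivots produces a block-constant null vector of $\mathcal{M}$, while positivity of all the pivots and of the disagreement directions gives $\mathcal{M}\succeq 0$ with exactly this one-dimensional, block-constant kernel; conditions (A) and (B), and hence \eqref{conditionCk} with not all $c_i$ zero, follow.

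The step I expect to be the main obstacle is reconciling (A) and (B) with one and the same family $\{\delta_i\}$: positive definiteness on $V^\perp$ pushes the $\delta_i$ upward, whereas singularity of $\tilde{M}$ is a knife-edge condition. The construction resolves this precisely because, for a connected cluster, the disagreement eigenvalues sit strictly below the Frobenius/row-sum level; choosing the hub and spokes strictly above that level (pure positivity) and the last cluster exactly at it (producing the single zero) threads both requirements through the same choice. Consequently the crux is really Assumption~3: it is exactly what forces each intermediate cluster to become irreducible after the hub is eliminated, so that the Perron eigenvector is ${\bf 1}_{n_h}$ and the spectral gap needed for the elimination is available; verifying the preservation of Assumption~2 and the irreducibility of the terminal block are the supporting technical points.
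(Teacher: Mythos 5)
Your skeleton matches the paper's (reduction via Lemma~\ref{lemma1}, recursive Schur complementation with pivots made positive definite through row-sum/Perron arguments, and a knife-edge choice of the last $\delta$), and both your invariant-subspace reformulation into conditions (A)--(B) and your observation that Assumption~2 is preserved under elimination are correct. However, two steps do not survive scrutiny.

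First, the elimination order. You reserve the uncovered cluster $\mathcal{V}_{i_k}$ for the final, singular pivot, so you need its terminal effective matrix $\mathcal{A}^{\mathrm{eff}}_{i_k,i_k}$ to be symmetric Metzler \emph{and} irreducible: only then is its constant row sum a simple dominant eigenvalue, so that $\delta_{i_k}I_{n_{i_k}}-\mathcal{A}^{\mathrm{eff}}_{i_k,i_k}$, with $\delta_{i_k}$ equal to that row sum, is positive semidefinite with kernel spanned by ${\bf 1}_{n_{i_k}}$. Neither property is available: Assumption~3 says nothing about $\mathcal{V}_{i_k}$, and the ``standard fact'' that Schur complementation preserves irreducibility is false for symmetric matrices --- the Schur complement of $\left[\begin{smallmatrix}1&1&1\\ 1&2&1\\ 1&1&2\end{smallmatrix}\right]$ (irreducible, positive definite) with respect to its $(1,1)$ entry is $I_2$, which is reducible. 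Cancellation cannot be excluded in your setting because, after the first elimination, the effective inter-cluster blocks ${\mathcal A}_{i,j}+{\mathcal A}_{i,i_1}\Phi_1^{-1}{\mathcal A}_{i_1,j}$ are sums of a nonpositive and a nonnegative matrix, hence sign-indefinite; and if the terminal block is not even Metzler, its constant row sum can lie strictly below its largest eigenvalue (compare $\left[\begin{smallmatrix}0&-1\\ -1&0\end{smallmatrix}\right]$, row sum $-1$, top eigenvalue $1$), in which case the knife-edge choice destroys positive semidefiniteness of ${\mathcal M}$ altogether. This is exactly why the paper eliminates the free cluster \emph{second}: right after the hub, its effective matrix ${\mathcal A}_{2,2}+{\mathcal A}_{2,1}\Phi_1^{-1}{\mathcal A}_{1,2}$ is still entrywise nonnegative, so mere positive definiteness follows from Lemma~\ref{lemma3} and a row-sum condition, while the last (singular) pivot is reserved for an Assumption-3 spoke, whose off-diagonal entries are positive by \eqref{Ass3_mat} and which is therefore irreducible.

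Second, the size of the $\delta$'s. Choosing each $\delta_{i_j}$ ``strictly above the current effective Frobenius eigenvalue'' suffices only for the pivot formed immediately after the hub is eliminated. Every later pivot acquires corrections of the form ${\mathcal M}_{h,j}^{(j-1)}\Phi_j^{-1}{\mathcal M}_{j,h}^{(j-1)}$ whose outer factors are sign-indefinite, so the Metzler structure, the positive off-diagonal entries, and hence the applicability of Perron--Frobenius to the subsequent effective matrices are not inherited automatically. The paper needs Lemma~\ref{Dlarge_enough} precisely here: the $\delta$'s of already-eliminated clusters must be taken \emph{sufficiently large} (possibly increased retroactively) so that the inverse pivots are entrywise arbitrarily small and the positive entries guaranteed by Assumption~3 dominate the corrections. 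A $\delta$ just above the spectral radius leaves $\Phi^{-1}$ of the order of the reciprocal spectral gap, which gives no entrywise control on the sign pattern of the later pivots, and in particular no guarantee that your Perron-based gap argument can be repeated at the next stage.
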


\begin{proof}
We assume without loss of generality that     $i_1=1$, while $i_h = h+1$ for $h=2,3, \dots, k-1$.   In fact, we can always relabel the clusters, and accordingly permute the blocks of ${\mathcal A}$, so that this condition is satisfied. \\
By Lemma \ref{lemma1},  
we need to prove that under the theorem assumptions   it is always possible to choose the real parameters 
$\delta_1, \delta_2, \dots, \delta_k$ so that   (1) the matrix ${\bf {\mathcal{M}}}$  is  singular and positive semidefinite, and 
(2) its
  kernel   is spanned by vectors taking the  form ${\bf z} = [\alpha_1 {\bf 1}_{n_1}^\top,\alpha_2 {\bf 1}_{n_2}^\top,  \dots, \alpha_k {\bf 1}_{n_k}^\top]^\top, \alpha_i \in \mathbb{R}, i \in [1,k]$.\\
  To this end we first address condition (2). By imposing ${\mathcal M} {\bf z}={\bf 0}_N$ we obtain the family of   equations
 \be
 \alpha_i \delta_i {\bf 1}_{n_i} = \alpha_i c_{ii}  {\bf 1}_{n_i} + \sum_{j=1, j\ne i}^k \alpha_j c_{ij} {\bf 1}_{n_i},
 \quad i\in [1,k],
 \ee
 that can be equivalently rewritten as
 $$\alpha_i \delta_i   = \alpha_i c_{ii}   + \sum_{j=1, j\ne i}^k \alpha_j c_{ij},
 \quad i\in [1,k],$$
 and hence in matrix form as
 \be ({\mathbb D} - {\mathbb C}) \begin{bmatrix}\alpha_1 \cr
 \alpha_2 \cr \vdots \cr \alpha_k \end{bmatrix} =0,
 \label{cortak}
 \ee
 where ${\mathbb D}= {\rm diag} \{ \delta_1, \delta_2, \dots, \delta_k\}$ and
 $${\mathbb C}
 = \begin{bmatrix} c_{11} & c_{12} & \dots & c_{1k} \cr
c_{21} & c_{22} & \dots & c_{2k}\cr
\vdots &\vdots & \ddots & \vdots\cr
c_{k1} & c_{k2} & \dots & c_{kk}\end{bmatrix}.
$$
So, if we ensure that ${\mathbb D} - {\mathbb C}$ is a singular matrix, we necessarily find a vector 
${\bf w} = [\alpha_1 ,\alpha_2,  \dots, \alpha_k ]^\top,$ such that $({\mathbb D} - {\mathbb C}){\bf w} = 0$, and hence
$[\alpha_1 {\bf 1}_{n_1}^\top,\alpha_2 {\bf 1}_{n_2}^\top,  \dots, \alpha_k {\bf 1}_{n_k}^\top]^\top$ is an eigenvector
of ${\mathcal M}$ associated with the zero eigenvalue. We will later prove that if such a vector exists, we can also ensure 
that all the eigenvectors of ${\mathcal M}$ associated with the zero eigenvalue are necessarily multiple of it  ($0$ is a simple eigenvalue of ${\mathcal M}$).

We now consider condition (1).   
To impose it, we make use of Lemma \ref{lemmaBoyd}  in the Appendix, by assuming as matrix $R$ the first block of the matrix  ${\mathcal M}$ 
\begin{equation}
{\bf \mathcal{M}}=\left[
\begin{array}{c|ccc}
     {\mathcal D}_1-{\mathcal A}_{1,1}& -{\mathcal A}_{1,2} &\dots &  -{\mathcal A}_{1,k}\\
     \hline
     -{\mathcal A}_{2,1}& {\mathcal D}_2-{\mathcal A}_{2,2} &\dots &  -{\mathcal A}_{2,k}\\
  \vdots &\vdots & \ddots & \vdots\cr
   -{\mathcal A}_{k,1} & -{\mathcal A}_{k,2} &\dots & {\mathcal D}_k-{\mathcal A}_{k,k}
\end{array} \right]
\end{equation}
and then imposing   that $R$ is positive definite, namely condition \eqref{first_matr_conk} holds:
\be\label{first_matr_conk}
   \Phi_1 := {\mathcal D}_1 -{\mathcal A}_{1,1}= \delta_1I_{n_1}-{\mathcal A}_{1,1} \succ 0, \ee
    and  that its Schur complement is positive semidefinite, namely condition \eqref{first_matr_con2k} holds.
\begin{figure*}
\be
{\footnotesize\begin{bmatrix}
     {\mathcal D}_2-{\mathcal A}_{2,2}-{\mathcal A}_{2,1}({\mathcal D}_1- {\mathcal A}_{1,1})^{-1}{\mathcal A}_{1,2} & -{\mathcal A}_{2,3}-{\mathcal A}_{2,1} ({\mathcal D}_1- {\mathcal A}_{1,1})^{-1}{\mathcal A}_{1,3} & \dots & - {\mathcal A}_{2,k}-{\mathcal A}_{2,1} ({\mathcal D}_1- {\mathcal A}_{1,1})^{-1}{\mathcal A}_{1,k}  \\
     -{\mathcal A}_{3,2}-{\mathcal A}_{3,1} ({\mathcal D}_1- {\mathcal A}_{1,1})^{-1} {\mathcal A}_{1,2} & {\mathcal D}_3-{\mathcal A}_{3,3}-{\mathcal A}_{3,1} ({\mathcal D}_1- {\mathcal A}_{1,1})^{-1}{\mathcal A}_{1,3} & \dots & -{\mathcal A}_{3,k} -{\mathcal A}_{3,1} ({\mathcal D}_1- {\mathcal A}_{1,1})^{-1} {\mathcal A}_{1,k} \\
     \vdots & \vdots &\ddots &\vdots\\
      -{\mathcal A}_{k,2}-{\mathcal A}_{k,1} ({\mathcal D}_1- {\mathcal A}_{1,1})^{-1} {\mathcal A}_{1,2} &-{\mathcal A}_{k,3}-{\mathcal A}_{k,1}({\mathcal D}_1- {\mathcal A}_{1,1})^{-1}{\mathcal A}_{1,3} & \dots &  {\mathcal D}_k -{\mathcal A}_{k,k} -{\mathcal A}_{k,1} ({\mathcal D}_1- {\mathcal A}_{1,1})^{-1} {\mathcal A}_{1,k}
    \end{bmatrix} \succeq 0.}
    \label{first_matr_con2k}
    \ee
 \begin{center}
-------------------------------------------------------------------------------------------------------------------------------------------------
\end{center}
\end{figure*}

\noindent We note that if we assume 
\begin{equation}\label{delta1_constraintk}
    \delta_{1}>c_{11}\ge 0,
\end{equation} 
{\color{black}then} $\Phi_1 {\bf 1}_{n_1}=(\delta_{1}I_{n_1} - {\mathcal A}_{1,1}) {\bf 1}_{n_1}\gg 0$.
 By making use of Lemma \ref{lemma3}, part i),  in the Appendix for $D= \delta_{1}I_{n_1}$, $A = {\mathcal A}_{1,1}$  
and 
 ${\bf z} = {\bf 1}_{n_1}$, we can claim that 
 $\Phi_1 = D-A$ is positive definite, i.e., \eqref{first_matr_conk} holds.
\\ 
To ensure that  \eqref{first_matr_con2k} holds, we apply again Lemma \ref{lemmaBoyd}, and impose 
condition \eqref{second_matr_conk}:
\be
\Phi_2 := {\mathcal D}_{2,2}-{\mathcal A}_{2,2}-{\mathcal A}_{2,1} 
\Phi_1^{-1}{\mathcal A}_{1,2} \succ 0,
\label{second_matr_conk}
\ee
as well as condition \eqref{second_matr_con2k}.
\begin{figure*}
\begin{eqnarray}
 {\mathcal H}_3 &:=&
\begin{bmatrix}\label{second_matr_con2k} 
      {\mathcal D}_3-{\mathcal A}_{3,3}-{\mathcal A}_{3,1} ({\mathcal D}_1- {\mathcal A}_{1,1})^{-1}{\mathcal A}_{1,3} & \dots & -{\mathcal A}_{3,k} -{\mathcal A}_{3,1} ({\mathcal D}_1- {\mathcal A}_{1,1})^{-1} {\mathcal A}_{1,k} \\
      \vdots &\ddots &\vdots \\
      -{\mathcal A}_{3,k}^\top-{\mathcal A}_{k,1} ({\mathcal D}_1- {\mathcal A}_{1,1})^{-1}{\mathcal A}_{1,3} & \dots &  {\mathcal D}_k -{\mathcal A}_{k,k} -{\mathcal A}_{k,1} ({\mathcal D}_1- {\mathcal A}_{1,1})^{-1} {\mathcal A}_{1,k}
    \end{bmatrix} \notag\\
    &-& \begin{bmatrix}
     -{\mathcal A}_{3,2}-{\mathcal A}_{3,1} ({\mathcal D}_1- {\mathcal A}_{1,1})^{-1} {\mathcal A}_{1,2}  \\
     \vdots \\
      -{\mathcal A}_{k,2}-{\mathcal A}_{k,1} ({\mathcal D}_1- {\mathcal A}_{1,1})^{-1} {\mathcal A}_{1,2}     \end{bmatrix} \cdot  \Phi_2^{-1}\\
      &\cdot& \begin{bmatrix}
     -{\mathcal A}_{2,3}-{\mathcal A}_{2,1} ({\mathcal D}_1- {\mathcal A}_{1,1})^{-1}{\mathcal A}_{1,3} & \dots & - {\mathcal A}_{2,k}-{\mathcal A}_{2,1} ({\mathcal D}_1- {\mathcal A}_{1,1})^{-1}{\mathcal A}_{1,k}\end{bmatrix} \notag
     \succeq 0
\ {\rm  and \ singular}.
    \end{eqnarray}
 \begin{center}
-------------------------------------------------------------------------------------------------------------------------------------------------
\end{center}
\end{figure*}

To address condition \eqref{second_matr_conk},
we first observe that by Lemma \ref{lemma3}, part ii), $\Phi_1^{-1}=({\mathcal D}_1- {\mathcal A}_{1,1})^{-1}$ is symmetric and nonnegative,  and hence so is 
${\mathcal A}_{2,2}+ {\mathcal A}_{2,1} \Phi_1^{-1}{\mathcal A}_{1,2}.$
But then we can apply Lemma \ref{lemma3}, part i), again, by assuming
$D={\mathcal D}_2$ and $A= {\mathcal A}_{2,2}+ {\mathcal A}_{2,1} \Phi_1^{-1}{\mathcal A}_{1,2}.$ Indeed, if we impose 
 the following constraint on $\delta_2$:
\begin{equation}\label{delta2_constraintk}
    \delta_2>c_{22}+\frac{c_{12}c_{21}}{\delta_1-c_{11}},
\end{equation}
then it is easy to verify that 
\begin{eqnarray*}
\Phi_2 {\bf 1}_{n_2} \!\!\!\!&=&\!\!\!\! (D- A){\bf 1}_{n_2} \\
\!\!\!\!&=&\!\!\!\! (\delta_2 - c_{22}) {\bf 1}_{n_2} - {\mathcal A}_{2,1} \Phi_1^{-1} c_{12} {\bf 1}_{n_1}\\
\!\!\!\!&=&\!\!\!\! (\delta_2 - c_{22}) {\bf 1}_{n_2} - c_{12} (\delta_1- c_{11})^{-1} c_{12} {\bf 1}_{n_1}\gg 0,
\end{eqnarray*}
where we used the fact that $\Phi_1^{-1} {\bf 1}_{n_1}=({\mathcal D}_1- {\mathcal A}_{11})^{-1} {\bf 1}_{n_1}= (\delta_1- c_{11})^{-1}  {\bf 1}_{n_1}.$
 Therefore $D- A$ is positive definite, namely 
\eqref{second_matr_conk} holds. \\
 Consider, now,  the first block of $\mathcal{H}_3$ in \eqref{second_matr_con2k}:
\begin{eqnarray*}
\Phi_3 &:=&   {\mathcal D}_3-{\mathcal A}_{3,3}-{\mathcal A}_{3,1} \Phi_1^{-1}{\mathcal A}_{1,3} - [ {\mathcal A}_{3,2}\\
&+&{\mathcal A}_{3,1} \Phi_1^{-1} {\mathcal A}_{1,2}] \cdot \Phi_2^{-1}    
[{\mathcal A}_{2,3}+{\mathcal A}_{2,1} \Phi_1^{-1}{\mathcal A}_{1,3}].
\end{eqnarray*}
We want to prove that for a suitable choice of $\delta_3$ 
we can ensure that   $\Phi_3$ is positive definite and impose that its Schur complement is positive semidefinite and singular.
We   observe that from   Assumption 3 (see also \eqref{Ass3_mat}) and the properties of $({\mathcal D}_1- {\mathcal A}_{1,1})^{-1}$
it follows that ${\mathcal A}_{3,3}+{\mathcal A}_{3,1} ({\mathcal D}_1- {\mathcal A}_{1,1})^{-1}{\mathcal A}_{1,3}$ is a nonnegative matrix whose off-diagonal entries are all positive.
On the other hand, by Lemma  \ref{Dlarge_enough} we can always choose $\delta_2>0$ sufficiently  large (something that ensures, in particular, that \eqref{delta2_constraintk} is met) to guarantee that the entries of 
$\Phi_2^{-1}$ are {\color{black}  arbitrarily small}, and hence the entries of
$[ {\mathcal A}_{3,2}+{\mathcal A}_{3,1} ({\mathcal D}_1- {\mathcal A}_{1,1})^{-1} {\mathcal A}_{1,2}] \Phi_2^{-1}  
[{\mathcal A}_{2,3}+{\mathcal A}_{2,1} ({\mathcal D}_1- {\mathcal A}_{1,1})^{-1}{\mathcal A}_{1,3}]$ are arbitrarily small. 
Therefore, the matrix $A= -\Phi_3 + {\mathcal D}_3 \approx {\mathcal A}_{3,3}+{\mathcal A}_{3,1} ({\mathcal D}_1- {\mathcal A}_{1,1})^{-1}{\mathcal A}_{1,3}$ has    positive off-diagonal entries. This ensures that $-\Phi_3$ is an irreducible Metzler matrix.

If we now choose $\delta_3$ such that
\begin{align}\label{delta3_constraintk}
    &\delta_3 > c_{33}+\frac{c_{31}c_{13}}{\delta_1-c_{11}} + \Big(c_{32}+\frac{c_{31}c_{12}}{\delta_{1}-c_{11}}\Big)
    \cdot\notag \\ 
    &\cdot\Big(\delta_2-c_{22}-\frac{c_{21}c_{12}}{\delta_1-c_{11}} \Big)^{-1}\Big(c_{23}+\frac{c_{21}c_{13}}{\delta_1-c_{11}}\Big)
\end{align}
we ensure that $\Phi_3$
satisfies $\Phi_3 {\bf 1}_{n_3}  \gg 0$. This proves that $\Phi_3$ is positive definite.\\
To generalise the previous reasonings, we need to find a compact way to express each    matrix obtained by means of the previous mechanism   (based on Lemma \ref{lemmaBoyd} and Lemma \ref{lemma3})  that consists of recursively imposing that the first block   is positive definite  and the opposite of a Metzler matrix, while its Schur complement is positive semidefinite, and so on.
We introduce the following notation:
{\small
 \begin{eqnarray*}
 {\mathcal M}_{i,j}^{(0)} &:=& {\mathcal A}_{i,j}\\
   {\mathcal M}_{i,j}^{(1)} &:=& {\mathcal A}_{i,j} + {\mathcal A}_{i,1} \Phi_1^{-1} {\mathcal A}_{1,j}\\ 
  &=& {\mathcal M}_{i,j}^{(0)} + {\mathcal M}_{i,1}^{(0)} \Phi_1^{-1} {\mathcal M}_{1,j}^{(0)}\\
 {\mathcal M}_{i,j}^{(2)} &:=& {\mathcal A}_{i,j} + {\mathcal A}_{i,1} \Phi_1^{-1} {\mathcal A}_{1,j}\\
&+& [{\mathcal A}_{i,2} + {\mathcal A}_{i,1} \Phi_1^{-1} {\mathcal A}_{1,2}] \Phi_2^{-1}
[{\mathcal A}_{2,j} + {\mathcal A}_{2,1} \Phi_1^{-1} {\mathcal A}_{1,j}]\\
&=& {\mathcal M}_{i,j}^{(0)} + {\mathcal M}_{i,1}^{(0)} \Phi_1^{-1} {\mathcal M}_{1,j}^{(0)}
+ {\mathcal M}_{i,2}^{(1)} \Phi_2^{-1} {\mathcal M}_{2,j}^{(1)}\\
&=& {\mathcal M}_{i,j}^{(1)} 
+ {\mathcal M}_{i,2}^{(1)} \Phi_2^{-1} {\mathcal M}_{2,j}^{(1)}.\end{eqnarray*}}
This allows to equivalently express the previous matrices  $\Phi_1$ and $\Phi_2$ given in  \eqref{first_matr_conk} and \eqref{second_matr_conk} as follows:
\begin{eqnarray*}
 \Phi_1 &=& 
 {\mathcal D}_1 - {\mathcal M}_{1,1}^{(0)}\\ 
  \Phi_2 &=& 
  {\mathcal D}_2 - {\mathcal M}_{2,2}^{(1)}.\end{eqnarray*}
The previous definitions can be generalised thus leading to
 \begin{eqnarray*}
 {\mathcal M}_{i,j}^{(h)} &:=&  {\mathcal M}_{i,j}^{(h-1)} + {\mathcal M}_{i,h}^{(h-1)} \Phi_{h}^{-1} {\mathcal M}_{h,j}^{(h-1)},
\end{eqnarray*}
and each block on the upper left corner  recursively obtained through this procedure (consider the first block, then take the Schur complement of the first block, and consider the first block of the matrix thus obtained...)
can be expressed as
$$ \Phi_h := {\mathcal D}_h -  {\mathcal M}_{h,h}^{(h-1)},$$ 
where ${\mathcal M}_{h,h}^{(h-1)}$ is a  Metzler matrix, provided that $\delta_{h-1}$ has been suitably chosen not only to make $\Phi_{h-1}$ positive definite, but also sufficiently large so that $\Phi_{h-1}^{-1}$ is arbitrarily small (this may possibly require to further increase the values of $\delta_2, \dots, \delta_{h-2}$ chosen at the previous stages) 
and hence
all the off-diagonal entries of ${\mathcal M}_{h,h}^{(h-1)}$ are positive, since they can be well approximated by the off-diagonal entries of ${\mathcal A}_{h,h} + {\mathcal A}_{h,1}({\mathcal D}_1-{\mathcal A}_{1,1})^{-1} {\mathcal A}_{1,h}$ which are positive, by assumption.
Consequently, also $-\Phi_h$  is an irreducible Metzler matrix,  $h\in [1,k-1]$. 
By imposing $\Phi_h {\bf 1}_{n_h} \gg 0$, we can determine a lower bound on $\delta_h$ such that
$\Phi_h, h\in [1, k-1]$, is positive definite.
Once we obtain the last Schur complement (which is also the last ``first block")
$$\Phi_k := {\mathcal D}_k -{\mathcal M}_{k,k}^{(k-1)},$$
we apply to it the same reasoning as before regarding the choice of $\delta_{k-1}$, to ensure that ${\mathcal M}_{k,k}^{(k-1)}$ is Metzler.
Therefore $- \Phi_k$ is irreducible, Metzler and Hurwitz.
By adopting the following recursive procedure, that mimics in the scalar case the one previously adopted to generate the  matrices ${\mathcal M}_{i,j}^{(h)}$ and $\Phi_h$,  
 \begin{eqnarray*}
 m_{i,j}^{(0)} &:=& c_{i,j}\\
 \phi_1 &:=& \delta_1 - c_{1,1} = \delta_1 - m_{1,1}^{(0)}\\ 
  m_{i,j}^{(1)} &:=& c_{i,j} + c_{i,1} \phi_1^{-1} c_{1,j}\\
    &=& m_{i,j}^{(0)} + m_{i,1}^{(0)} \phi_1^{-1} m_{1,j}^{(0)}\\
 \phi_2 &:=& \delta_2 - c_{2,2} - c_{2,1}  \phi_1^{-1} c_{1,2}= \delta_2 - m_{2,2}^{(1)}
  \\
   m_{i,j}^{(2)} &:=& c_{i,j} + c_{i,1}  \phi_1^{-1} c_{1,j}\\
&+& [c_{i,2} + c_{i,1}  \phi_1^{-1} c_{1,2}]  \phi_2^{-1}
[c_{2,j} + c_{2,1}  \phi_1^{-1} c_{1,j}]\\
&=& m_{i,j}^{(0)} + m_{i,1}^{(0)}  \phi_1^{-1} m_{1,j}^{(0)}
+  m_{i,2}^{(1)}  \phi_2^{-1} m_{2,j}^{(1)}
 \\
 &=&   m_{i,j}^{(1)} 
+  m_{i,2}^{(1)}  \phi_2^{-1} m_{2,j}^{(1)}\\
&\vdots&\\
 m_{i,j}^{(k-1)} &:=&   m_{i,j}^{(k-2)} + m_{i,k-1}^{(k-2)}  \phi_{k-1}^{-1} m_{k-1,j}^{(k-2)},
\\
  \phi_k &:=& \delta_k -  m_{k,k}^{(k-1)},
\end{eqnarray*}
and assuming
$$    \delta_k = 
    m_{kk}^{(k-1)},
$$
we can ensure that 
%
 $\Phi_k {\bf 1}_{n_h} = 0$.
As  $-\Phi_k$ is Metzler and irreducible,  and ${\bf 1}_{n_k}$ is a strictly positive eigenvector of this matrix 
corresponding to  $0$, then $0$  is a simple and dominant eigenvalue of   $-\Phi_k$ \cite{Berman-Plemmons}. Since the eigenvalues of ${\mathcal{M}}$ are the union of the eigenvalues of the positive definite matrices in   \eqref{first_matr_conk} and \eqref{second_matr_conk}, etc. and of the positive semidefinite and singular matrix $\Phi_k$, that have been obtained from ${\bf 
\mathcal{M}}$ by repeatedly applying the Schur complement formula  with respect to the first block, then ${\bf 
\mathcal{M}}$ is positive semidefinite with a simple eigenvalue in $0$.

 Now we observe that all the constraints on the $\delta_i, i\in[1,k],$ that we have derived, can be simply obtained
from the (non symmetric) matrix ${\mathbb D} - {\mathbb C}$ by   imposing that the   $(1,1)$-entry of each of the  first $k-1$ Schur complements, obtained according to the same algorithm that we used to define the matrices $\Phi_h,h\in [1,k-1]$, are positive,   while the $k$-th one is zero. Indeed, such $(1,1)$-entries just correspond to the coefficients $\phi_1, \phi_2, \dots, \phi_k$.
But this   implies that 
 if we choose $\delta_i, i\in [1,k]$, according to the previous algorithm, we also ensure that ${\mathbb D} - {\mathbb C}$ is singular. Therefore
${\mathbb D} - {\mathbb C}$ has an eigenvector ${\bf w} = [\alpha_1 ,\alpha_2,  \dots, \alpha_k ]^\top,$ corresponding to $0$, and hence
$[\alpha_1 {\bf 1}_{n_1}^\top,\alpha_2 {\bf 1}_{n_2}^\top,  \dots, \alpha_k {\bf 1}_{n_k}^\top]^\top$ is an eigenvector
of ${\mathcal M}$ associated with the zero eigenvalue. 
Moreover, since we proved that $0$ is a simple eigenvalue, all the eigenvectors of ${\mathcal M}$ corresponding
to $0$ have the desired block structure.
\end{proof}
\medskip

\begin{remark}
By referring to the previous proof and the terminology adopted within, we can deduce for the diagonal matrix ${\mathcal D}$ the following expression:
\begin{eqnarray*}{\mathcal D} &=& {\rm diag}
\{m_{11}^{(0)} I_{n_1}, m_{22}^{(1)} I_{n_2}, \dots, m_{kk}^{(k-1)}I_{n_k}\} \\
&+& {\rm diag}
\{q_1 I_{n_1}, q_2 I_{n_2} , \dots, 0 I_{n_k}\},
\end{eqnarray*}
 where $q_1, q_2, \dots, q_{k-1}$ are positive real numbers, sufficienty large to ensure that the various matrices 
 $-\Phi_h, h=3, 4,\dots, k,$ have nonnegative off-diagonal entries.

\end{remark}
\medskip

\begin{example} \label{exampleAss2cont}    
 Consider, again,  Example \ref{exampleAss2}. As previously remarked, the communication graph satisfies Assumptions 1, 2 and 3 for $i_1=1$ and $i_2=3$ (as in the proof).
If we apply the algorithm proposed in the proof of the previous theorem we obtain the constraints
$$\delta_1 > 1, \qquad \delta_2 > 2 + \frac{2}{\delta_1 -1},$$ $$
\delta_3 = \frac{2}{\delta_1 -1} + \frac{\left(-4 + \frac{4}{\delta_1-1}\right)\left(-1 + \frac{1}{\delta_1-1}\right)}
{\left[\delta_2-2-\frac{2}{\delta_1-1}\right]^{-1}}.$$
If we assume $\delta_1=2$  then, independently of $\delta_2$, one gets $\delta_3=2$. 
It turns out that for every choice of $\delta_2>4$ 
  the  eigenvector corresponding to the zero eigenvalue of ${\mathcal M}$ is
$ {\bf z}= [ \, 1 \,\, 1 \, \lvert \,\, 0 \,\, 0 \,\, 0 \,\, 0\, \lvert \, -1]^\top.$
\\
Figure \ref{graph_cij} shows the state evolution of the system described as in \eqref{model_final}, with adjacency matrix as in Example \ref{exampleAss2},  with random initial conditions $\bf{x}(0)$ taken as realizations of a gaussian vector with $0$ mean and variance  $\sigma^{2}=4$, i.e. ${\bf x}(0) \sim \mathcal{N}(0,4)$. The graph shows that tripartite consensus is reached after about $1.5$ units of time with regime values $c_1=-1.39$, $c_2=0$, $c_3=1.39$.  \\
 Alternatively, one can choose $\delta_1=3, \delta_2=4$ and $\delta_3=2$, and get as dominant eigenvector 
$\bar {\bf z}= [ \, 0 \,\, 0 \, \lvert \, 1 \, 1 \, 1 \, 1\, \lvert \, -2]^\top.$
\medskip

 \begin{figure}[H]
     \centering
     \includegraphics[scale=0.23]{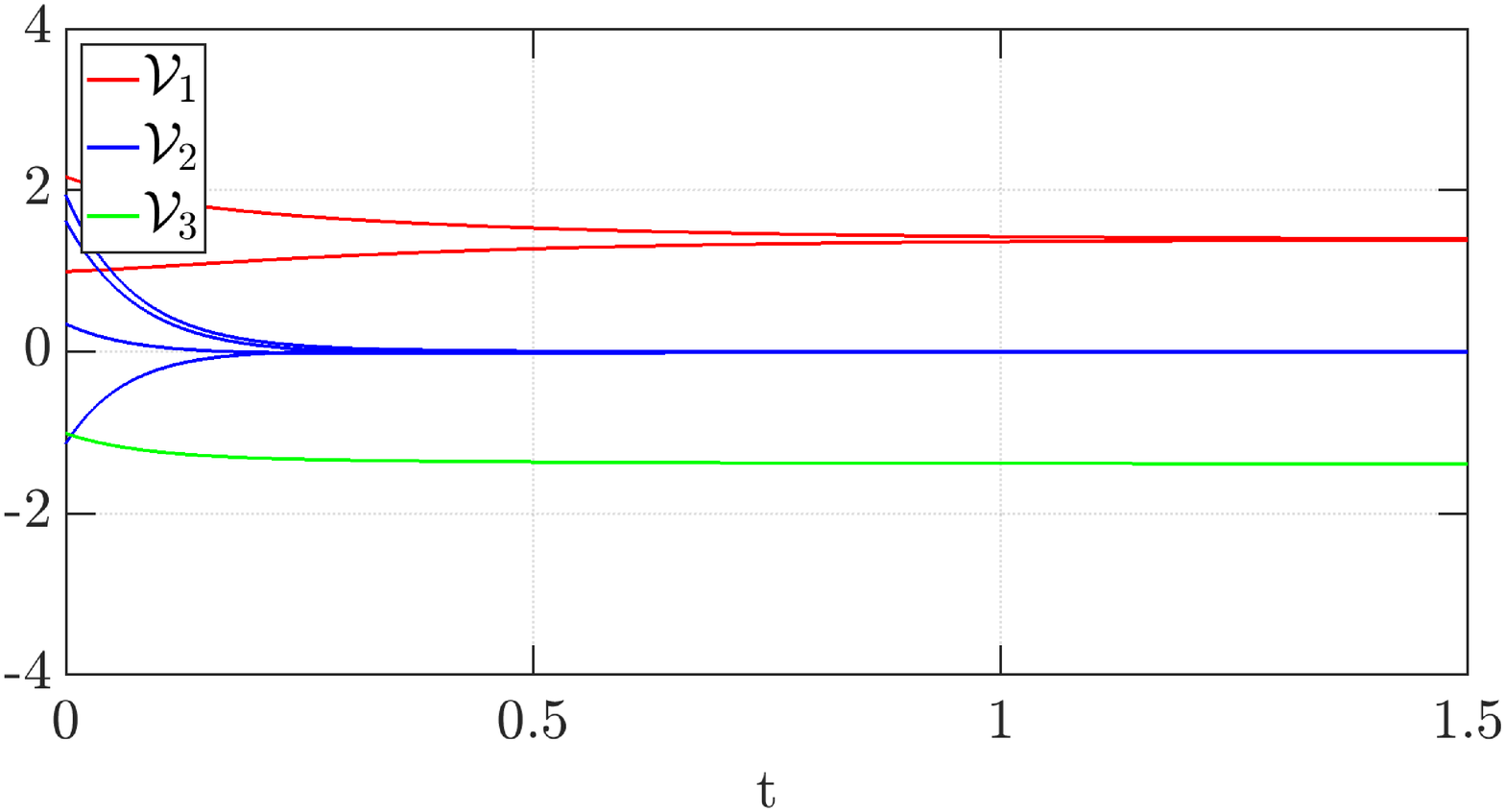}
     \caption{Graph with homogeneous trust/mistrust weights: tripartite consensus.}
     \label{graph_cij}
\end{figure}
\end{example}
\smallskip

\begin{remark}
Theorem \ref{teok} applies also when the number of clusters coincides with the number of agents in the network, i.e. each cluster consists of a single node and all nodes are enemies to each other.   Indeed, the homogeneity constraint is trivially satisfied with $c_{ij}= \mathcal{A}_{ij} = [\mathcal{A}]_{ij}$.
\end{remark}

\section{{\textit k}-partite consensus for multi-agent systems with complete unweighted graph}

 In this subsection we will focus our attention on multi-agent systems with  complete, unweighted and undirected communication graphs that are clustered into an arbitrary number $k$ of  groups. By resorting to a suitable relabelling of the agents, we can always assume that the adjacency matrix  ${\mathcal A}$ is described as follows
{\small \begin{equation*}
{\mathcal  A}\!=\!\!
%
%
\left[\!
\begin{array}{cccc}
     {\bf 1}_{n_1} {\bf 1}_{n_1}^\top - I_{n_1} &  -{\bf 1}_{n_1} {\bf 1}_{n_2}^\top & \dots & - {\bf 1}_{n_1} {\bf 1}_{n_k}^\top\\
     -{\bf 1}_{n_2} {\bf 1}_{n_1}^\top& {\bf 1}_{n_2} {\bf 1}_{n_2}^\top - I_{n_2} &\dots &  -{\bf 1}_{n_2} {\bf 1}_{n_k}^\top\\
     \vdots & \vdots &\ddots & \vdots \\
     -{\bf 1}_{n_k} {\bf 1}_{n_1}^\top & -{\bf 1}_{n_k} {\bf 1}_{n_2}^\top &\dots & {\bf 1}_{n_k} {\bf 1}_{n_k}^\top - I_{n_k}
     \end{array}\! \right],
\end{equation*}}
$n_i$ being the cardinality of the $i$-th cluster.
Also, in this case we plan to design a 
 distributed control law for the system \eqref{model} of the type
\eqref{u}, with ${\bf {\mathcal{M}}} = {\bf  \mathcal{D}-\mathcal{A}},$
and $\mathcal{D}\in \mathbb{R}^{N\times N}$ a  diagonal matrix, block-partitioned according to the block-partition of ${\mathcal A}$, namely described as in \eqref{matriceD},
with
${\mathcal D}_i =  \delta_i I_{n_i}$. 
 \smallskip

Under the previous hypotheses on the adjacency matrix ${\mathcal A}$, Assumptions 1, 2 and 3 are  trivially satisfied. So, the existence of a suitable choice of the coefficients  $\delta_i, i\in [1,k],$ that ensures $k$-partite consensus  follows from the previous Theorem \ref{teok}. On the other hand, the particular structure of ${\mathcal A}$ allows to obtain a much simpler proof as well as an explicit expression of (a possible choice of) the $\delta_i$'s that cannot be obtained in the general homogeneous case. For this reason we provide here an independent proof of this result. 

 \begin{theorem} \label{teokcomplete}
Consider the multi-agent system \eqref{model}, with undirected, signed,  unweighted and {\em complete} communication graph $\mathcal{G}$ split into $k$ clusters, and adjacency matrix described as above. Assume that the agents adopt
 the distributed control law \eqref{u}, with    ${\bf {\mathcal{M}}} \in \mathbb{R}^{N\times N}$  described as in
\eqref{M},   $\mathcal{D}\in \mathbb{R}^{N\times N}$ described as in \eqref{matriceD}
and 
${\mathcal D}_i =  \delta_i I_{n_i}\in {\mathbb R}^{n_i\times n_i}$, for $i\in [1,k].$
Then 
by assuming
\be
\delta_i= 2 n_i-1, \quad i\in [1,k],
\label{deltai*}
\ee
we can ensure 
 that the closed-loop multi-agent system \eqref{model_final},
reaches $k$-partite  consensus.
\end{theorem}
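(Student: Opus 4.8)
The plan is to verify directly the two conditions of Lemma \ref{lemma1}, exploiting the rank-one structure of every block of ${\mathcal M}$. First I would compute ${\mathcal M} = {\mathcal D} - {\mathcal A}$ for the prescribed choice $\delta_i = 2n_i - 1$: each diagonal block becomes ${\mathcal M}_{i,i} = (2n_i-1)I_{n_i} - ({\bf 1}_{n_i}{\bf 1}_{n_i}^\top - I_{n_i}) = 2n_i I_{n_i} - {\bf 1}_{n_i}{\bf 1}_{n_i}^\top$, while each off-diagonal block becomes ${\mathcal M}_{i,j} = -{\mathcal A}_{i,j} = {\bf 1}_{n_i}{\bf 1}_{n_j}^\top$, $i \neq j$. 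Since ${\mathcal A}$ is symmetric and ${\mathcal D}$ diagonal, ${\mathcal M}$ is symmetric, so it suffices to locate its eigenvalues and its zero-eigenspace.

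The key step is to split ${\mathbb R}^N$ (with $N = \sum_i n_i$) into two orthogonal ${\mathcal M}$-invariant subspaces: the $k$-dimensional subspace $U$ of block-constant vectors ${\bf z} = [\alpha_1 {\bf 1}_{n_1}^\top, \dots, \alpha_k {\bf 1}_{n_k}^\top]^\top$, and its orthogonal complement $W = \{{\bf v} : {\bf 1}_{n_i}^\top {\bf v}_i = 0,\ \forall i\}$ of block-zero-sum vectors. For ${\bf v} \in W$ one checks that every off-diagonal block annihilates it (because ${\bf 1}_{n_j}^\top {\bf v}_j = 0$) and the diagonal block gives ${\mathcal M}_{i,i} {\bf v}_i = 2n_i {\bf v}_i$; hence ${\mathcal M}$ acts on $W$ as the positive scalar $2n_i$ on each block, so ${\mathcal M}|_W \succ 0$. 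For ${\bf z} \in U$ the $i$-th block of ${\mathcal M}{\bf z}$ equals $\big(\sum_{j} n_j \alpha_j\big){\bf 1}_{n_i}$, independent of $i$, so ${\mathcal M}$ maps $U$ into itself; in the orthonormal basis $\hat{\bf e}_i = {\bf e}_i / \sqrt{n_i}$ (with ${\bf e}_i = [{\bf 0}^\top,\dots,{\bf 1}_{n_i}^\top,\dots,{\bf 0}^\top]^\top$ the $i$-th block indicator) this restriction is represented by the rank-one matrix ${\bf p}{\bf p}^\top$ with ${\bf p} = [\sqrt{n_1}, \dots, \sqrt{n_k}]^\top$, which is positive semidefinite with the simple eigenvalue $\|{\bf p}\|^2 = N$ (eigenvector ${\bf 1}_N$) and the eigenvalue $0$ of multiplicity $k-1$.

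Putting the two pieces together shows at once that ${\mathcal M} \succeq 0$ and is singular, which is condition (1). For condition (2), since $U$ and $W$ are complementary ${\mathcal M}$-invariant subspaces, $\ker {\mathcal M} = (\ker {\mathcal M} \cap U) \oplus (\ker {\mathcal M} \cap W)$; the $W$-part is trivial because ${\mathcal M}|_W$ is nonsingular, so the whole kernel sits inside $U$ and therefore consists only of block-constant vectors $[\alpha_1 {\bf 1}_{n_1}^\top, \dots, \alpha_k {\bf 1}_{n_k}^\top]^\top$, namely exactly those with $\sum_j n_j \alpha_j = 0$. Lemma \ref{lemma1} then yields $k$-partite consensus. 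The only point requiring care is this last one: unlike the generic situation in Theorem \ref{teok}, here the zero eigenvalue is not simple but has multiplicity $k-1$, so I must rule out that some eigenvector outside the block-constant form enters the kernel; the invariant-subspace splitting is precisely what excludes this. Everything else reduces to the elementary rank-one eigenvalue computations sketched above.
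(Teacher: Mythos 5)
Your proof is correct, and it takes a genuinely different route from the paper's. The paper verifies the two conditions of Lemma \ref{lemma1} separately: for condition (2) it solves ${\mathcal M}{\bf z}={\bf 0}_N$ over block-constant vectors (obtaining the constraint $\sum_i n_i\alpha_i=0$) and then argues block by block that any kernel vector ${\bf w}$ satisfies $2n_i{\bf w}_i=({\bf 1}_{n_i}^\top{\bf w}_i){\bf 1}_{n_i}-\sum_{j\ne i}({\bf 1}_{n_j}^\top{\bf w}_j){\bf 1}_{n_i}$, forcing ${\bf w}_i$ to be a multiple of ${\bf 1}_{n_i}$; for condition (1) it reuses the Schur-complement machinery of Theorem \ref{teok} (Lemma \ref{lemmaBoyd} together with Lemma \ref{lemma3}), showing $2n_1I_{n_1}-{\bf 1}_{n_1}{\bf 1}_{n_1}^\top\succ 0$ and computing the Schur complement explicitly, which turns out to be block diagonal with positive semidefinite singular blocks $2n_iI_{n_i}-2{\bf 1}_{n_i}{\bf 1}_{n_i}^\top$. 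You instead obtain both conditions at once from the orthogonal ${\mathcal M}$-invariant splitting ${\mathbb R}^N=U\oplus W$ into block-constant and block-zero-sum vectors: your computations that ${\mathcal M}$ acts as the scalar $2n_i$ on the $i$-th block of $W$ and as the rank-one matrix ${\bf p}{\bf p}^\top$, ${\bf p}=[\sqrt{n_1},\dots,\sqrt{n_k}]^\top$, on $U$ are both right, and the kernel splitting $\ker{\mathcal M}=(\ker{\mathcal M}\cap U)\oplus(\ker{\mathcal M}\cap W)$ is legitimate precisely because the two subspaces are complementary and invariant. What your route buys is the complete spectrum of ${\mathcal M}$ — eigenvalue $2n_i$ with multiplicity $n_i-1$ for each $i$, the simple eigenvalue $N$, and $0$ with multiplicity $k-1$ — hence also quantitative convergence-rate information, with no matrix inversions and no appeal to the appendix lemmas. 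What the paper's route buys is that the complete-graph case is exhibited as a direct specialization of the general algorithm of Theorem \ref{teok}, with an explicit closed-form Schur complement, which is the paper's stated reason for giving a separate proof. Your closing remark is also accurate and worth keeping: here $0$ is not a simple eigenvalue (its multiplicity is $k-1$, matching the paper's kernel description), and Lemma \ref{lemma1} only requires the kernel to be spanned by block-constant vectors, which your argument establishes.
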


\begin{proof}
By Lemma \ref{lemma1},  we need to prove that under the theorem  hypotheses and by assuming the parameters $\delta_i, i\in [1,k],$ as in
\eqref{deltai*},  we can ensure that (1) the matrix ${\bf {\mathcal{M}}}$  is  singular and positive semidefinite, and 
(2) its
  kernel   is spanned by vectors taking the block form ${\bf z} = [\alpha_1 {\bf 1}_{n_1}^\top,\alpha_2 {\bf 1}_{n_2}^\top,  \dots, \alpha_k {\bf 1}_{n_k}^\top]^\top, \alpha_i \in \mathbb{R}, i \in [1,k]$.\\
 We first verify condition (2). By assuming  $\delta_i, i\in [1,k],$ as in
\eqref{deltai*}, and by   imposing ${\mathcal M} {\bf z}={\bf 0}_N$, for ${\bf z}$ described as above, we obtain the family of   equations
 \be
 \alpha_i n_i   - \sum_{j=1, j\ne i}^k \alpha_j n_j =0,
 \ i\in [1,k],
 \ee
that can be equivalently rewritten  in matrix form as
 \be   {\mathbb N}_k \begin{bmatrix}\alpha_1 \cr
 \alpha_2 \cr \vdots \cr \alpha_k \end{bmatrix} =0,
 \label{corta}
 \ee
 where  
 $${\mathbb N}_k
 = \begin{bmatrix} n_1 &  n_2 & \dots &  n_k\cr
 n_1 & n_2  & \dots & n_k\cr
\vdots &\vdots & \ddots & \vdots\cr
n_1 & n_2 & \dots &   n_k \end{bmatrix}.
$$
This is clearly a singular matrix and its kernel coincides with the set of vectors 
$[\alpha_1,\ \alpha_2, \ \dots, \ \alpha_k ]^\top, \alpha_i \in \mathbb{R}, i \in [1,k]$, such that
\be
\sum_{i=1}^{k} \alpha_i n_i =0.
\label{conditions}\ee
This implies that ${\rm ker}\ {\mathcal M}$ includes all the vectors

${\bf z} = [\alpha_1 {\bf 1}_{n_1}^\top,\alpha_2 {\bf 1}_{n_2}^\top,  \dots, \alpha_k {\bf 1}_{n_k}^\top]^\top$, with $\alpha_i \in \mathbb{R}, i \in [1,k]$, satisfying \eqref{conditions}.
To prove that {\em all} the eigenvectors of ${\mathcal M}$ corresponding to the zero eigenvalue    
take the form $[\alpha_1 {\bf 1}_{n_1}^\top,\dots,  \alpha_k {\bf 1}_{n_k}^\top]^\top, \alpha_i \in \mathbb{R}, i \in [1,k]$, 
let ${\bf w} = \begin{bmatrix} {\bf w}_1^\top & {\bf w}_2^\top &\dots & {\bf w}_k^\top\end{bmatrix}^\top$
be any eigenvector of ${\mathcal M}$ corresponding to $0$. Then condition ${\mathcal M} {\bf w}={\bf 0}_N$ implies
$$
2n_i {\bf w}_{i} =   ({\bf 1}_{n_i}^\top {\bf w}_{i}) {\bf 1}_{n_i}  - \sum_{j=1, j\ne i}^k  ({\bf 1}_{n_j}^\top {\bf w}_{j}) {\bf 1}_{n_i},
 \ i\in [1,k],
$$
thus ensuring  that ${\bf w}_i$ is a scalar multiple of ${\bf 1}_{n_i}$ for every $i\in[1,k].$

Finally, we want to prove that by assuming    $\delta_i, i\in [1,k],$ as in  \eqref{deltai*} we guarantee that
 ${\mathcal M}$ is positive semidefinite  and singular. We first note that under the previous assumptions
${\mathcal M}$ can be rewritten as 
in \eqref{M_fin}.

\begin{figure*}
\be
{\mathcal M}=
\left[\!
\begin{array}{cccc}
   2n_1 I_{n_1} -  {\bf 1}_{n_1} {\bf 1}_{n_1}^\top  &  {\bf 1}_{n_1} {\bf 1}_{n_2}^\top & \dots &  {\bf 1}_{n_1} {\bf 1}_{n_k}^\top\\
     {\bf 1}_{n_2} {\bf 1}_{n_1}^\top& 2 n_2 I_{n_2}- {\bf 1}_{n_2} {\bf 1}_{n_2}^\top   &\dots &  {\bf 1}_{n_2} {\bf 1}_{n_k}^\top\\
     \vdots & \vdots &\ddots & \vdots \\
     {\bf 1}_{n_k} {\bf 1}_{n_1}^\top & {\bf 1}_{n_k} {\bf 1}_{n_2}^\top &\dots & 2 n_k I_{n_k}- {\bf 1}_{n_k} {\bf 1}_{n_k}^\top 
     \end{array}\! \right]. \label{M_fin}
\ee
 \begin{eqnarray}
{\mathcal H}&=&
\left[\!
\begin{array}{cccc}
   2n_2 I_{n_2} -  {\bf 1}_{n_2} {\bf 1}_{n_2}^\top  &  {\bf 1}_{n_2} {\bf 1}_{n_3}^\top & \dots &  {\bf 1}_{n_2} {\bf 1}_{n_k}^\top\\
     {\bf 1}_{n_3} {\bf 1}_{n_2}^\top& 2 n_3 I_{n_3}- {\bf 1}_{n_3} {\bf 1}_{n_3}^\top   &\dots &  {\bf 1}_{n_3} {\bf 1}_{n_k}^\top\\
     \vdots & \vdots &\ddots & \vdots \\
     {\bf 1}_{n_k} {\bf 1}_{n_2}^\top & {\bf 1}_{n_k} {\bf 1}_{n_3}^\top &\dots & 2 n_k I_{n_k}- {\bf 1}_{n_k} {\bf 1}_{n_k}^\top 
     \end{array}\! \right] - 
     \left[\begin{array}{c}
      {\bf 1}_{n_2} {\bf 1}_{n_1}^\top\\
       {\bf 1}_{n_3} {\bf 1}_{n_1}^\top\\
     \vdots  \\
     {\bf 1}_{n_k} {\bf 1}_{n_1}^\top\end{array}\! \right] (2n_1 I_{n_1} -  {\bf 1}_{n_1} {\bf 1}_{n_1}^\top)^{-1} \cdot
     \nonumber \\
&\cdot&      \left[\begin{matrix}
      {\bf 1}_{n_1} {\bf 1}_{n_2}^\top&
       {\bf 1}_{n_1} {\bf 1}_{n_3}^\top&
     \dots  &
     {\bf 1}_{n_1} {\bf 1}_{n_k}^\top\end{matrix}\! \right]. \label{SchurC}
\end{eqnarray}
\begin{center}
-------------------------------------------------------------------------------------------------------------------------------------------------
\end{center}
\end{figure*}

 \noindent If we prove that

(A)  $R := 2n_1 I_{n_1} -  {\bf 1}_{n_1} {\bf 1}_{n_1}^\top$ is positive definite, and \\
(B) its Schur complement ${\mathcal H}$, given in \eqref{SchurC}, is positive semidefinite and singular,\\
then, by  Lemma \ref{lemmaBoyd}, ${\mathcal M}$ will be positive semidefinite and singular.

By Lemma \ref{lemma3} part i), we can claim that, since 
$(2n_1 I_{n_1} -  {\bf 1}_{n_1} {\bf 1}_{n_1}^\top) {\bf 1}_{n_1}= n_1 {\bf 1}_{n_1} \gg 0,$
(A) holds.

Now, we observe that, for any vector 

${\bf z} = [\alpha_1 {\bf 1}_{n_1}^\top,\alpha_2 {\bf 1}_{n_2}^\top,\dots,\alpha_k {\bf 1}_{n_k}^\top]^\top$, with $\alpha_i \in \mathbb{R},i \in [1,k], \alpha_1\ne 0$, satisfying \eqref{conditions}, we have $$0=(2n_1 I_{n_1}-{\bf 1}_{n_1} {\bf 1}_{n_1}^\top)\alpha_1 {\bf 1}_{n_1}+{\bf 1}_{n_1} \alpha_2 n_2+ \dots + {\bf 1}_{n_1} \alpha_kn_k,$$ and hence
\be
(2n_1 I_{n_1} -  {\bf 1}_{n_1} {\bf 1}_{n_1}^\top)^{-1} {\bf 1}_{n_1}  = - \frac{\alpha_1 }{(\sum_{i=2}^k  \alpha_i n_i)} {\bf 1}_{n_1} =  \frac{ 1 }{ n_1} {\bf 1}_{n_1}.
\label{pippo}
\ee
This allows to verify that the matrix ${\mathcal H}$ takes the block diagonal form
\begin{eqnarray*}
{\mathcal H} &=& {\rm diag} 
\{2n_2 I_{n_2} -  2 {\bf 1}_{n_2} {\bf 1}_{n_2}^\top, 2 n_3 I_{n_3}- 2 {\bf 1}_{n_3} {\bf 1}_{n_3}^\top,   \dots,\\
&& 2 n_k I_{n_k}- 2 {\bf 1}_{n_k} {\bf 1}_{n_k}^\top\}.
\end{eqnarray*}
Each diagonal block $2n_i I_{n_i} -  2 {\bf 1}_{n_i} {\bf 1}_{n_i}^\top, i\in [2,k],$ is easily seen (by a straightforward extension of Lemma \ref{lemma3}) to be positive semidefinite and singular (with $0$ as a simple eigenvalue).
 So, we have shown that ${\mathcal M}$ is positive semidefinite and singular and hence condition (B) holds. 
Consequently, $k$-partite consensus is asymptotically achieved.
\end{proof}
\medskip

\medskip

 \begin{example} \label{example}
Consider the multi-agent system \eqref{model_final}, with complete, unweighted communication graph and 5 clusters of size $n_1=9, n_2=13, n_3=14, n_4 =11,n_5 =7 $.
We have assumed  $\delta_i= 2 n_i-1, i\in [1,5],$ 
and that ${\bf x}(0)$ is a realization of the gaussian random vector  with $0$ mean and variance $\sigma^{2}=4$, i.e. ${\bf x}(0) \sim \mathcal{N}(0,4)$. The system reaches $5$-partite consensus after about $0.2$ units of time, with regime values $c_1 = -0.1781, c_2 = 0.484, c_3 =- 0.9866, c_4 = 0.1849, c_5 = 1.004$,  as illustrated in Fig. \ref{C_graph}.
\begin{figure}[H]
     \centering
     \includegraphics[scale=0.23]{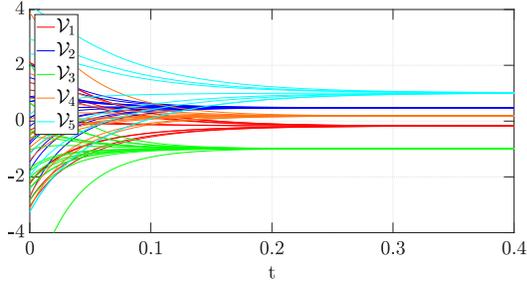}
     \caption{Complete graph: $5$-partite consensus.}
     \label{C_graph}
\end{figure}

\end{example}

\section{{\textit k}-partite consensus for {\color{black} a class of} nonlinear models}\label{5}

In the following, an extension of the $k$-partite consensus analysis to nonlinear systems is proposed. To this aim, by adopting a set-up similar to the one in \cite{Altafini2013},  we consider a multi-agent system described as in \eqref{model}, with communication graph $\mathcal{G}=(\mathcal{V},\mathcal{E},{\mathcal A})$
satisfying Assumption 1 and
subjected to the feedback law
\begin{equation}\label{nl_u}
{\bf u} = \bf f(\bf x),
\end{equation}
where $\bf f: \mathbb{R}^{N} \rightarrow \mathbb{R}^{N}$ is a Lipschitz continuous function satisfying $\bf f( \bf 0)=\bf 0$. \medskip

{\bf Assumption 4 on the vector field ${\bf f}$:}\ 
We assume for ${\bf f}$ a distributed additive expression. Specifically, we assume that each component $f_i(\bf x)$, $i \in [1,N]$, of the function $\bf f$ depends only on the states of the neighbouring agents of the agent $i$, namely   for every $i\in [1,N],$ the function $f_i({\bf x})$
depends only on those entries $x_j$ such that $(j,i) \in \mathcal{E}$, and is expressed as follows
\be
f_i({\bf x}) = - \sum_{j: (j,i)\in {\mathcal E}} \left([{\mathcal D}]_i \tilde h_i(x_i(t)) - [{\mathcal A}]_{i,j} \tilde h_j(x_j(t))\right),\label{fi}
\ee
where $[{\mathcal D}]_i$ is a   real number, and   the nonlinear function $\tilde h_k(\cdot)$ 
  is the same for all the agents belonging to the same cluster.  
So, if we assume that the agents are partitioned into $k$ clusters and ordered in such a way  that $\mathcal{A}$ is described as   in \eqref{adjacency_mk},
 the vector ${\bf x}$ is accordingly partitioned as
$${\bf x} = \begin{bmatrix} {\bf x}_1^\top & {\bf x}_2^\top &\dots & {\bf x}_k^\top\end{bmatrix}^\top.$$
with ${\bf x}_i \in {\mathbb R}^{n_i}$  representing the states of the agents belonging to the $i$-th cluster.
The function
$\bf f$  can be expressed as the  product of the  matrix $\mathcal{M}$,  given in \eqref{M}, and of  a nonlinear function ${\bf h}({\bf x})$:
\begin{equation}\label{nl_model}
\dot{\bf x } = -\mathcal{M}\bf h (\bf x),
\end{equation}
with 
$\bf h (\bf x) = [{\bf h}_1({\bf x}_1)^\top\ {\bf h}_2({\bf x}_2)^\top\ \dots \ {\bf h}_k({\bf x}_k)^\top]^\top$, and  $\bf{h}_i(\bf x_i): \mathbb{R}^{n_i}\rightarrow \mathbb{R}^{n_i}$,  $i \in [1,k]$, described as follows
{\small
\begin{eqnarray}
&&{\bf h}_i({\bf x}_i) = [h_i(x_{s_i +1}) \ h_i(x_{s_i +2})\ \dots \ h_i(x_{s_i +n_i})]^\top, \ \qquad \label{cond1}\\
&& s_i =\begin{cases}0, & i=1;\cr
 \sum_{j<i} n_j, & i=2,\dots,k.
 \end{cases} \label{def_si_k}
 \end{eqnarray}}
The scalar functions $h_i(\cdot)$ are assumed to be monotone, bijective functions belonging to the set $\mathcal{R}$ defined as follows:

\small
\begin{align*}
\mathcal{R}&:=\!\! \Big\{ h:\mathbb{R}\rightarrow 
\mathbb{R}: (h(x_a)-h(x_b))(x_a-x_b)>0,  x_a \neq x_b,\\
&\!\! h(0)=0, \int_{x_b}^{x_a} (h(z)-h(x_b))dz \rightarrow \infty \ as \ \lvert x_a-x_b\lvert \rightarrow \infty\! \Big\}.
\end{align*}
\normalsize

The following theorem provides  sufficient conditions for a networked closed-loop system described as in  \eqref{nl_model}  to reach   $k$-partite consensus that extend those given in Theorem \ref{teok}. Similarly, the extension of Theorem \ref{teokcomplete}  
would be possible.
\medskip

\begin{theorem}  \label{quattordici} 
Consider the multi-agent system \eqref{model}, with undirected, signed,  weighted and connected communication graph
 $\mathcal{G}$ satisfying  Assumptions 1, 2 and 3, and distributed control law \eqref{nl_u} satisfying Assumption  4 and \eqref{fi}. Consequently, the multi-agent system is  described as in \eqref{nl_model},
 with  the function  $\bf h (\bf x)$ defined as above,  ${\bf {\mathcal{M}}} \in \mathbb{R}^{N\times N}$  described as in
\eqref{M},   $\mathcal{D}\in \mathbb{R}^{N\times N}$ described as in \eqref{matriceD}
and 
 ${\mathcal D}_i = \delta_i I_{n_i}$, for $i\in [1,k].$
There exist   $\delta_i\in {\mathbb R}, i\in [1,k],$ such that the closed-loop multi-agent system \eqref{nl_model}
reaches $k$-partite consensus.
\end{theorem}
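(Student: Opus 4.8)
The plan is to reuse the very same coefficients $\delta_i$ produced by Theorem \ref{teok}, so that $\mathcal{M}=\mathcal{D}-\mathcal{A}$ is symmetric, positive semidefinite and singular, with a \emph{simple} zero eigenvalue and $\ker\mathcal{M}=\mathrm{span}\{{\bf z}^*\}$, where ${\bf z}^*=[\alpha_1{\bf 1}_{n_1}^\top,\dots,\alpha_k{\bf 1}_{n_k}^\top]^\top$ and $(\alpha_1,\dots,\alpha_k)^\top$ is the kernel vector of $\mathbb{D}-\mathbb{C}$. The convergence analysis is then carried out through LaSalle's invariance principle, with a Lyapunov function built from the integrals appearing in the definition of the class $\mathcal{R}$. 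First I would exhibit a conserved quantity: since $\mathcal{M}$ is symmetric, ${\bf z}^*$ is a left null vector as well, so along any trajectory $\tfrac{d}{dt}({\bf z}^{*\top}{\bf x})=-({\bf z}^{*\top}\mathcal{M}){\bf h}({\bf x})=0$, whence $\gamma:={\bf z}^{*\top}{\bf x}(t)\equiv{\bf z}^{*\top}{\bf x}(0)$.

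The candidate limit is the block-constant vector ${\bf x}^*=[c_1{\bf 1}_{n_1}^\top,\dots,c_k{\bf 1}_{n_k}^\top]^\top$ with $c_i=h_i^{-1}(\beta\alpha_i)$: for such a vector ${\bf h}({\bf x}^*)=\beta{\bf z}^*\in\ker\mathcal{M}$, so it is an equilibrium for every $\beta\in\mathbb{R}$. I would then show that the scalar map $g(\beta):=\sum_{i}\alpha_i n_i\,h_i^{-1}(\beta\alpha_i)$ is a strictly increasing bijection of $\mathbb{R}$, since its derivative $\sum_i\alpha_i^2 n_i\,(h_i^{-1})'(\beta\alpha_i)$ is strictly positive (each $h_i$ is an increasing bijection, so $(h_i^{-1})'>0$, and ${\bf z}^*\neq{\bf 0}_N$). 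Hence there is a unique $\beta^*$ with $g(\beta^*)=\gamma$, which singles out a unique candidate equilibrium ${\bf x}^*$ satisfying ${\bf z}^{*\top}{\bf x}^*=\gamma$, with the $c_i=c_i({\bf x}(0))$ depending on the initial condition only through $\gamma$.

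Next I would introduce the Lyapunov function
$$V({\bf x})=\sum_{i\in[1,k]}\ \sum_{\ell\in\mathcal{V}_i}\int_{c_i}^{x_\ell}\big(h_i(s)-h_i(c_i)\big)\,ds .$$
Monotonicity of the $h_i$ makes each summand nonnegative, so $V\ge0$ with $V({\bf x}^*)=0$, and the radial-unboundedness clause in the definition of $\mathcal{R}$ guarantees that the sublevel sets $\{V\le c\}$ are bounded (a bounded value of any summand forces $|x_\ell-c_i|$ bounded). Its gradient is $\nabla V={\bf h}({\bf x})-{\bf h}({\bf x}^*)$, and since ${\bf h}({\bf x}^*)=\beta^*{\bf z}^*\in\ker\mathcal{M}$ the cross term vanishes, giving
$$\dot V=\big({\bf h}({\bf x})-{\bf h}({\bf x}^*)\big)^\top\big(-\mathcal{M}{\bf h}({\bf x})\big)=-\,{\bf h}({\bf x})^\top\mathcal{M}\,{\bf h}({\bf x})\le0,$$
because $\mathcal{M}\succeq0$. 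Moreover $\dot V=0$ iff $\mathcal{M}{\bf h}({\bf x})={\bf 0}_N$, i.e. iff ${\bf h}({\bf x})\in\ker\mathcal{M}$; by injectivity of the $h_i$ this forces ${\bf x}$ to be block-constant, so $\{\dot V=0\}$ consists entirely of equilibria.

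Finally I would invoke LaSalle: the trajectory, confined to the compact invariant sublevel set $\{V\le V({\bf x}(0))\}$, converges to the largest invariant subset of $\{\dot V=0\}$, namely the one-parameter family of block-constant equilibria ${\bf x}(\beta)$. Since ${\bf z}^{*\top}{\bf x}$ is conserved and restricts to the injective map $g(\beta)$ on that family, the $\omega$-limit set reduces to the single point ${\bf x}^*$, which establishes $k$-partite consensus. I expect the two global arguments, rather than the (routine) computation of $\dot V$, to be the real obstacle: proving boundedness of trajectories, which is precisely what the integral condition defining $\mathcal{R}$ is tailored to provide, and upgrading LaSalle's set-convergence to convergence to a single equilibrium, which hinges on combining the conserved quantity with the strict monotonicity of $g$.
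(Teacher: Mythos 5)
Your proposal is correct, and its core coincides with the paper's own proof: both reuse the coefficients $\delta_i$ from Theorem \ref{teok} (so that $\mathcal{M}$ is positive semidefinite and singular with a simple zero eigenvalue and block-constant kernel), both characterize the equilibria of \eqref{nl_model} as the block-constant vectors ${\bf x}^*$ with ${\bf h}({\bf x}^*)\in\ker\mathcal{M}$, and both use the same integral Lyapunov function with the same computation $\dot V=-{\bf h}({\bf x})^\top\mathcal{M}{\bf h}({\bf x})\le 0$. Where you genuinely depart from the paper is in the endgame. The paper anchors $V$ at one (essentially arbitrary) equilibrium, concludes that every such equilibrium is globally \emph{simply} stable, and then asserts that convergence to the equilibrium set yields $k$-partite consensus; it never rules out that the trajectory wanders along the one-parameter family of equilibria without converging to a single point, which is what \eqref{conditionCk} actually requires. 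Your extra ingredients close exactly this gap: the conserved quantity $\gamma={\bf z}^{*\top}{\bf x}(t)$, the strictly monotone bijection $g(\beta)=\sum_i\alpha_i n_i\,h_i^{-1}(\beta\alpha_i)$ that selects the unique compatible $\beta^*$, the anchoring of $V$ at the equilibrium determined by $\gamma$, and the LaSalle step upgraded to point convergence by injectivity of $g$ on the equilibrium family; as a bonus you obtain the limit values $c_i=h_i^{-1}(\beta^*\alpha_i)$ explicitly in terms of ${\bf x}(0)$. Two minor touch-ups: strict monotonicity of $g$ should not be argued through $(h_i^{-1})'$, since membership in $\mathcal{R}$ grants no differentiability --- it follows directly because each map $\beta\mapsto\alpha_i n_i\,h_i^{-1}(\beta\alpha_i)$ is nondecreasing (strictly increasing whenever $\alpha_i\ne 0$) as a composition of strictly increasing bijections; and the surjectivity of $g$ onto $\mathbb{R}$, needed so that $\beta^*$ exists for every initial condition, should be stated as a consequence of the assumed bijectivity of the $h_i$. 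With these adjustments your argument is a strictly more rigorous version of the one in the paper.
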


\begin{proof}
  Clearly, the equilibrium points of   system \eqref{nl_model} are all the vectors ${\bf x}^*$ in ${\mathbb R}^N$ such that ${\bf 0}= {\mathcal M} {\bf h}({\bf x}^*).$ We want to show that it is possible to choose the coefficients $\delta_i, i\in [1,k],$ so that all the equilibrium points of the system are block partitioned according to the block partitioning of the matrix ${\mathcal M}$, and  they are globally simply stable. This ensures that the set of all such equilibrium points is the attractor of every state trajectory (there cannot be limit cycles and the trajectories cannot diverge), and hence the multi-agent system asymptotically reaches    $k$-partite consensus. \\
We have proved (see Theorem \ref{teok}) that under   Assumptions 1, 2 and 3 it is possible to choose the coefficients 
$\delta_1, \delta_2 \dots \delta_k\in {\mathbb R}$ so that 
$\mathcal{M}$ is a singular positive semidefinite matrix, having $0$ as a simple  eigenvalue and the corresponding eigenvector takes the form 
${\bf z}=[\alpha_1 {\bf 1}_{n_1}^\top,\alpha_2 {\bf 1}_{n_2}^\top, \dots \alpha_k {\bf 1}_{n_k}^\top]^\top$,
for suitable $\alpha_i \in \mathbb{R}, i \in [1,k]$.
 This implies that the equilibrium points of the system \eqref{nl_model} are the vectors ${\bf x}^*$ such that ${\bf h}({\bf x}^*)\in \langle {\bf z}\rangle$.
As the maps $h_i$ belong to ${\mathcal R}$,  for every $c\in {\mathbb R}$ such that 
$c\cdot \alpha_i$ belongs to the image of  the corresponding $h_i$ for every $i\in [1,k],$
there exist 
 $\beta_1, \beta_2 \dots \beta_k \in {\mathbb R}$ such that $c\cdot [\alpha_1 {\bf 1}_{n_1}^\top,\alpha_2 {\bf 1}_{n_2}^\top, \dots \alpha_k {\bf 1}_{n_k}^\top]^\top = {\bf h}([\beta_1 {\bf 1}_{n_1}^\top,\beta_2 {\bf 1}_{n_2}^\top, \dots \beta_k {\bf 1}_{n_k}^\top]^\top)$.

 Suppose, without loss of generality, that this is the case for $c=1$, 
set ${\bf x}^* := [\beta_1 {\bf 1}_{n_1}^\top,\beta_2 {\bf 1}_{n_2}^\top, \dots, \beta_k {\bf 1}_{n_k}^\top]^\top$,
and  consider a suitably modified version of the Lyapunov function $V: \mathbb{R}^{N}\rightarrow \mathbb{R}$ adopted in \cite{Altafini2013}:  
\begin{eqnarray}
V({\bf x}) &=& \sum_ {i=1}^k \sum_{j=s_i+1}^{s_i+n_i}  \int_{x_{j}^{*}}^{x_j}(h_i(z)-h_i(x_{j}^{*}))  dz  \notag \\
&=& \sum_ {i=1}^k \sum_{j=s_i+1}^{s_i+n_i}  \int_{\beta_i}^{x_j}(h_i(z)-\alpha_i)  dz ,
\end{eqnarray}
(see \eqref{def_si_k} for the definition of $s_i$)
for ${\bf x} \neq {\bf x}^{*}$.  Moreover, $V({\bf x})$ is radially unbounded and
its derivative is
{\small
\begin{eqnarray*}
\dot{V}({\bf x})\!\!&\!\! =&\!\!  \sum_ {i=1}^k \sum_{j=s_i+1}^{s_i+n_i}  (h_i(x_j)-h_i(x_{j}^{*}))  \dot{x}_j \\
  \!\!&\!\!=&\!\! -   ({\bf h}({\bf x})-{\bf h}({\bf x}^{*}))^\top  \mathcal{M} {\bf h}({\bf x})   = -   {\bf h}({\bf x})^\top  \mathcal{M} {\bf h}({\bf x})\leq 0,
\end{eqnarray*}}
 where we used the fact that 
$\mathcal{M} = \mathcal{M}^\top$ and $\mathcal{M}{\bf h}({\bf x}^{*}) =\mathcal{M}{\bf z} =0$, and the last inequality holds since $\mathcal{M}$ is a singular positive semidefinite matrix.  
 This ensures that every equilibrium point ${\bf x}^*$ of the system is globally stable and since all such equilibrium points have the required block-structure,   $k$-partite  consensus is always guaranteed.
\end{proof}
\medskip

%
%

\begin{example} \label{exampleNL}  
Consider the multi-agent system \eqref{nl_model}, with complete, unweighted and undirected communication graph,   ${\bf h}({\bf x}(t)) = \tanh({\bf x}(t))$, and 4 clusters of size $n_1=6, n_2=9, n_3=11, n_4 = 7$.
We have assumed that ${\bf x}(0)$ is a realization of the gaussian random vector   with $0$ mean and variance $\sigma^{2}=4$, i.e. ${\bf x}(0) \sim \mathcal{N}(0,4)$ and   $\delta_i=2n_i-1$ for every $i\in [1,4]$.  The system reaches  $4$-partite consensus after approximately $0.5$ time units, with regime values $c_1 = 0.4967, c_2 = -0.5369, c_3 = -1.925, c_4 = 2.418$,  as illustrated in Fig. \ref{C_graph_NL_h_4}.
\end{example}

\section{Conclusions}\label{6}
In this work we addressed the consensus problem for multi-agent systems with agents   split into $k$ groups: agents belonging to the same group cooperate, while those belonging to different ones  compete.  The proposed   algorithm 
represents a modified version of the classical DeGroot's type of consensus algorithm, where the modification pertains how much agents in the same group must be conservative about their opinion in order to guarantee that they converge to a common decision, depending on their initial opinions, namely they reach $k$-partite consensus. The degree of stubborness is shared by all the members of the group.  We investigated this problem under the assumption that agents in the same cluster have the same amount of trust(/distrust) to be distributed among their friends(/enemies) with special focus  on the case of complete, signed, unweighted graph for which a simplified solution is proposed.  Also, an extension of the $k$-partite  consensus problem to a nonlinear set-up has been investigated.
 \begin{figure}[H]
     \centering
     \includegraphics[scale=0.22]{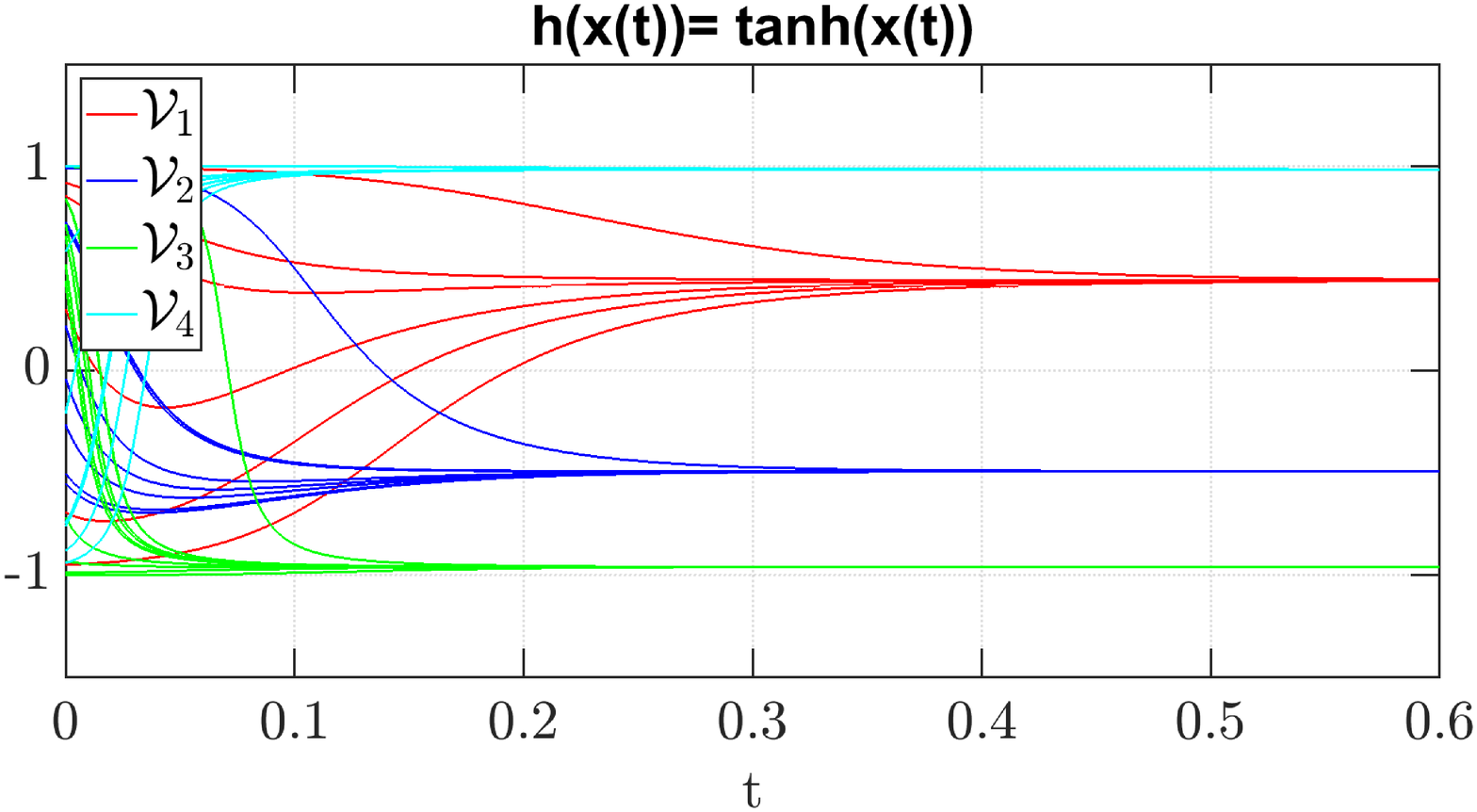} \quad \includegraphics[scale=0.22]{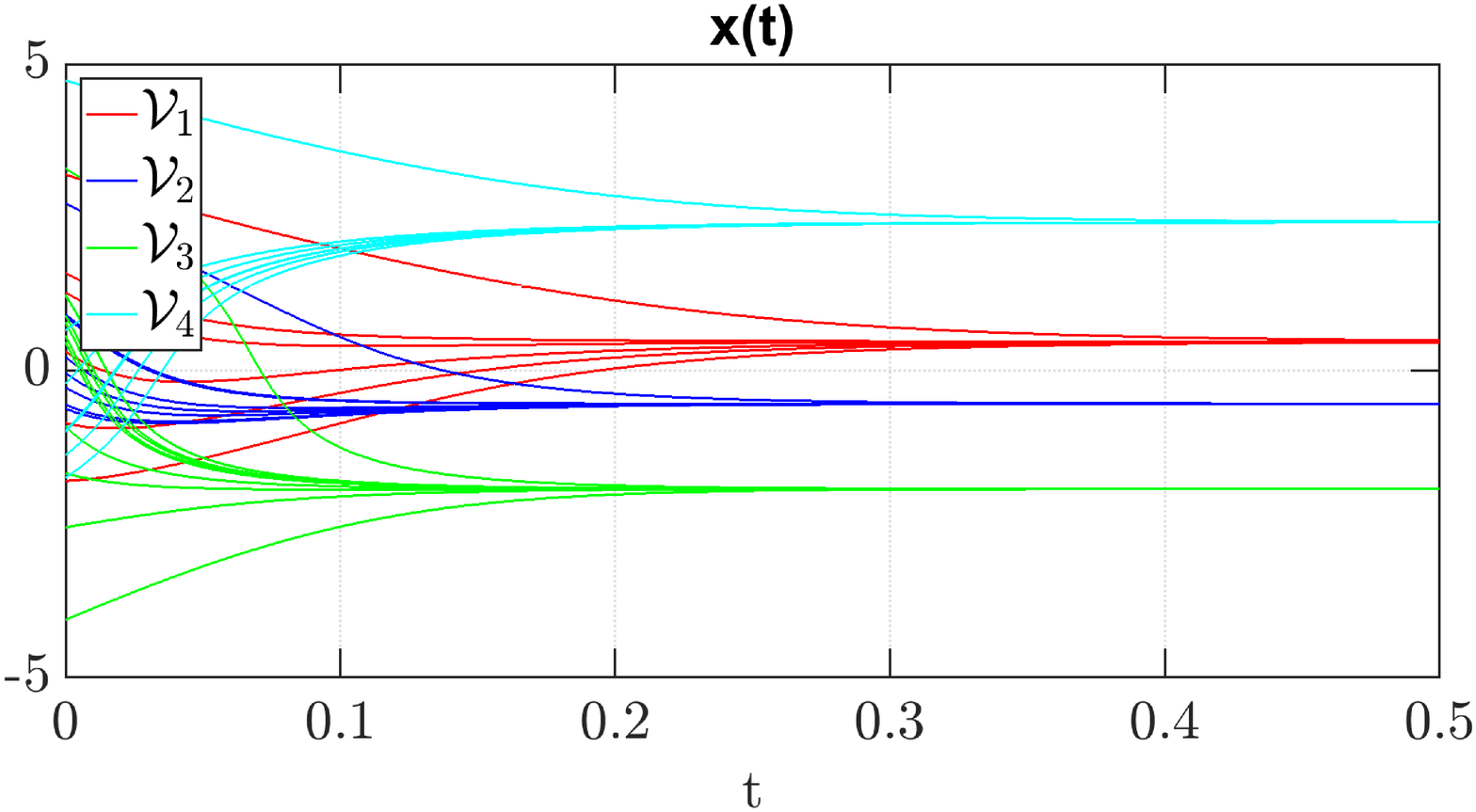} 
     \caption{Non linear model with ${\bf h}({\bf x}(t)) = {\rm tanh}({\bf x}(t))$ and complete graph. The graph above shows the evolution of ${\bf h}({\bf x}(t))$ over time. The graph below shows the evolution of the state vector ${\bf x}(t)$.}
     \label{C_graph_NL_h_4}
\end{figure}

 Future research should focus on how $k$-partite consensus may be reached 
 even when the homogeneity assumption does not {\color{black} hold}.
 In particular, it would be interesting to investigate whether there is a way to ensure $k$-partite consensus in a robust way 
 under weaker  {\color{black} assumptions on} the cooperative/antagonistic relationships, for instance
 assuming that the agents can be partitioned into $k$ groups such that   
 intra-clusters weights are over a certain thresholds and inter-cluster {\color{black}  weights} below a specific threshold. 

\appendix


We present here three technical results that are used several times   in the paper.
\medskip

\begin{lemma}\label{lemmaBoyd} \cite{BoydVandenberghe}  Let 
$$M =\begin{bmatrix} R & S\cr S^\top & Q\end{bmatrix} \in {\mathbb R}^{n\times n},$$
with $R\in {\mathbb R}^{k\times k}$ and $Q \in {\mathbb R}^{(n-k)\times (n-k)},$
be a symmetric matrix.
If $R=R^\top$ is positive definite and its Schur complement $Q- S^\top R^{-1} S$ is positive (semi)definite, then $M$ is positive (semi)definite, and 
$\sigma(M)= \sigma(R)\cup \sigma(Q- S^\top R^{-1} S)$.
\end{lemma}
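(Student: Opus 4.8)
The plan is to prove this by exhibiting an explicit congruence (block $LDL^\top$) factorization of $M$ and then invoking Sylvester's law of inertia, so that the definiteness and the zero-eigenvalue count of $M$ can be read directly off the two diagonal blocks $R$ and the Schur complement $\mathcal{S} := Q - S^\top R^{-1} S$.

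First I would use that $R = R^\top \succ 0$ is invertible with $R^{-1} = R^{-\top}$, so the unit upper-triangular block matrix $T := \begin{bmatrix} I & R^{-1} S \\ 0 & I \end{bmatrix}$ is well defined and invertible (indeed $\det T = 1$). A direct block-by-block expansion then verifies the factorization
\begin{equation}
M = T^\top \begin{bmatrix} R & 0 \\ 0 & \mathcal{S} \end{bmatrix} T,
\end{equation}
where the $(2,2)$ block closes up precisely because $\mathcal{S}$ subtracts off the term $S^\top R^{-1} S$. This is the routine computation I would carry out to anchor everything else.

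Next, since $T$ is invertible, $M$ is congruent to the block-diagonal matrix ${\rm diag}\{R, \mathcal{S}\}$, and by Sylvester's law of inertia the two share the same inertia, i.e. the same number of positive, negative and zero eigenvalues. A symmetric block-diagonal matrix is positive (semi)definite exactly when each of its blocks is; as $R \succ 0$ already, ${\rm diag}\{R, \mathcal{S}\} \succ 0$ (respectively $\succeq 0$) precisely when $\mathcal{S} \succ 0$ (respectively $\succeq 0$), which transfers both definiteness conclusions to $M$ at once. For the spectral assertion, inertia additivity gives that the inertia of $M$ is the sum of the inertias of $R$ and $\mathcal{S}$; in particular $\dim \ker M = \dim \ker \mathcal{S}$, which is exactly the content used in the proofs of Theorems \ref{teok} and \ref{teokcomplete} to certify that the zero eigenvalue of $\mathcal{M}$ is simple.

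The step I would be most careful about is the spectral claim itself: congruence preserves inertia, not individual eigenvalues, so the equality $\sigma(M) = \sigma(R) \cup \sigma(\mathcal{S})$ cannot hold literally in general. The honest reading — and the only one the paper actually invokes — is at the level of inertia (equivalently, signatures): $M$ has as many positive, negative and zero eigenvalues as $R$ and $\mathcal{S}$ have together. Since definiteness and the multiplicity of the zero eigenvalue are both inertia-invariants preserved by the congruence above, this weaker interpretation suffices for all downstream applications.
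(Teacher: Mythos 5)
The paper never proves this lemma --- it is quoted from \cite{BoydVandenberghe} --- so there is no internal proof to compare against; your congruence argument is the standard one behind that citation. The factorization $M = T^\top\,\mathrm{diag}\{R,\,\mathcal{S}\}\,T$, with $T$ the unit upper-triangular block matrix you define, is correct, and Sylvester's law of inertia then yields both positive (semi)definiteness implications. More importantly, your caveat about the spectral claim is not excessive caution: it is a genuine correction to the statement as printed. The identity $\sigma(M)=\sigma(R)\cup\sigma(Q-S^\top R^{-1}S)$ is false in general; for instance, with
\begin{equation*}
M=\begin{bmatrix}1 & 1/2\\ 1/2 & 1\end{bmatrix},\qquad R=Q=1,\quad S=1/2,
\end{equation*}
one has $\sigma(M)=\{3/2,\,1/2\}$ while $\sigma(R)\cup\sigma(Q-S^\top R^{-1}S)=\{1,\,3/4\}$. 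Congruence preserves inertia (this is Haynsworth's additivity formula), not individual eigenvalues, exactly as you say. Your repaired reading --- the inertia of $M$ is the sum of the inertias of $R$ and of the Schur complement, so in particular $\dim\ker M=\dim\ker(Q-S^\top R^{-1}S)$ when $R\succ 0$ --- is precisely what the proofs of Theorems \ref{teok} and \ref{teokcomplete} actually require when they conclude that $\mathcal{M}$ is positive semidefinite with a simple zero eigenvalue: those arguments invoke the literal union of spectra, but they only use the counts of zero and negative eigenvalues, so they remain valid under the corrected lemma. Your proof is complete and, on this point, more careful than the statement it was asked to prove.
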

\smallskip

\begin{lemma}\label{lemma3} Let $D\in {\mathbb R}^{n\times n}$ be a diagonal matrix and let $A\in {\mathbb R}^{n \times n}$ be a symmetric  Metzler matrix, then:
\begin{itemize}
\item[i)]  $D-A$ is positive definite if and only if there exists a strictly positive vector ${\bf z}\in {\mathbb R}^n$ such that $(D-A){\bf z} \gg 0$;
\item[ii)] If condition i) holds, then $(D-A)^{-1} \ge 0$ and is symmetric.

\end{itemize}
\end{lemma}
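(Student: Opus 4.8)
The plan is to recognize $M := D - A$ as a symmetric $Z$-matrix and to exploit the Metzler/M-matrix machinery already available in the preliminaries. Since $A = A^\top$ and $D$ is diagonal, $M = M^\top$, and its off-diagonal entries satisfy $[M]_{i,j} = -[A]_{i,j} \le 0$ for $i \ne j$; equivalently, $-M = A - D$ is a symmetric Metzler matrix. Hence all eigenvalues of $M$ are real and $M \succ 0$ if and only if $A - D$ is Hurwitz, i.e. $\lambda_F(A-D) < 0$, by the Frobenius-eigenvalue property of Metzler matrices recalled above. My strategy is threefold: (a) prove the implication ``$\exists\,{\bf z}\gg 0$ with $M{\bf z}\gg 0 \Rightarrow M\succ 0$'' by a diagonal scaling plus a Gershgorin argument; (b) prove part ii), namely $M^{-1}\ge 0$, by a Neumann-series expansion; and (c) deduce the remaining implication of i) from part ii).

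For step (a), I would assume ${\bf z}\gg 0$ satisfies $M{\bf z}\gg 0$ and set $Z := {\rm diag}\{z_1,\dots,z_n\}$ with $z_i := [{\bf z}]_i > 0$, a symmetric invertible matrix. I then consider the congruent matrix $\widetilde M := Z M Z$, whose off-diagonal entries $z_i z_j [M]_{i,j}$ are still nonpositive and whose $i$-th row sum equals $z_i\,[M{\bf z}]_i > 0$. From these two facts $[\widetilde M]_{i,i} > \sum_{j\ne i} |[\widetilde M]_{i,j}|$ for every $i$, so $\widetilde M$ is symmetric, strictly diagonally dominant with positive diagonal, hence positive definite by Gershgorin's theorem. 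Since $\widetilde M = Z^\top M Z$ with $Z$ invertible, Sylvester's law of inertia gives $M\succ 0$.

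For step (b), I would assume $M\succ 0$, so that $A-D$ is Metzler and Hurwitz with $\lambda_F(A-D)<0$. I pick $s>0$ large enough that $N := A - D + sI_n \ge 0$ (only the diagonal needs to be lifted, since the off-diagonal entries of $A-D$ are already nonnegative). Then $M = sI_n - N$, and the spectral radius of the nonnegative matrix $N$ equals its Frobenius eigenvalue $\lambda_F(A-D)+s < s$, so $\rho(N/s) < 1$. Consequently the Neumann series converges and $M^{-1} = s^{-1}\sum_{t\ge 0} (N/s)^t \ge 0$, every summand being nonnegative; symmetry of $M^{-1}$ is immediate from $M = M^\top$. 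This establishes ii).

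Finally, for the outstanding direction of i), suppose $M\succ 0$; by step (b) the matrix $M^{-1}\ge 0$ is invertible, so no row of $M^{-1}$ is identically zero. Taking ${\bf z} := M^{-1}{\bf 1}_n$ yields $M{\bf z} = {\bf 1}_n \gg 0$, while $[{\bf z}]_i = \sum_j [M^{-1}]_{i,j} > 0$ for each $i$ (a nonnegative row that is not all zero), so ${\bf z}\gg 0$. I expect the main obstacle to be the Perron--Frobenius bookkeeping in step (b) --- justifying $\rho(N) = \lambda_F(A-D)+s$ and hence the convergence of the Neumann series --- together with upgrading the nonnegativity of ${\bf z}$ to strict positivity in the converse, which hinges precisely on the nonsingularity of $M^{-1}$.
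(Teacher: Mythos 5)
Your proof is correct, but it takes a genuinely different, self-contained route. The paper dispatches the lemma in a few lines by leaning on two cited results from the positive-systems literature: part i) is obtained from the chain ``$D-A\succ 0$ iff $A-D$ is negative definite iff (by symmetry) $A-D$ is Hurwitz,'' combined with the standard characterization that a Metzler matrix is Hurwitz if and only if there exists ${\bf z}\gg 0$ with ${\bf z}^\top (A-D)\ll 0$ (which symmetry converts into $(D-A){\bf z}\gg 0$); part ii) then follows by citing the M-matrix fact that the inverse of a Metzler Hurwitz matrix is nonpositive. You instead prove both cited facts from first principles, and with a different dependency structure: the congruence $\widetilde M = Z(D-A)Z$ with $Z={\rm diag}({\bf z})$ turns the hypothesis $(D-A){\bf z}\gg 0$ into strict diagonal dominance with positive diagonal, so Gershgorin plus Sylvester's law of inertia gives sufficiency; the Neumann series $(sI_n-N)^{-1}=s^{-1}\sum_{t\ge 0}(N/s)^t$ with $N=A-D+sI_n\ge 0$ and $\rho(N)=\lambda_F(A-D)+s<s$ gives ii); and you then route the remaining direction of i) \emph{through} ii) by exhibiting the explicit witness ${\bf z}=(D-A)^{-1}{\bf 1}_n$, where invertibility rules out zero rows of the nonnegative inverse and upgrades ${\bf z}\ge 0$ to ${\bf z}\gg 0$ --- the same observation that, implicitly, underlies the cited Metzler--Hurwitz characterization in the first place. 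The Perron--Frobenius bookkeeping you flag is indeed the only point needing care ($\rho(N)$ is an eigenvalue of the nonnegative matrix $N$, and here $N$ is symmetric so its spectrum is real and $\rho(N)=\lambda_{\max}(N)=\lambda_F(A-D)+s$), and it goes through. In short: the paper's proof buys brevity and consistency with the Perron--Frobenius toolkit it cites elsewhere, while yours buys self-containedness and makes visible exactly where symmetry is used (inertia, Gershgorin, real spectrum); both are sound.
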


\begin{proof} \ i)\ 
$D-A$ is positive definite if and only if $A-D$ is negative definite, and since $A-D$ is a symmetric matrix this is equivalent to saying that $A-D$ is Hurwitz.
On the other hand, being a Metzler matrix, $A-D$ is Hurwitz if and only if  \cite{BookFarina}
there exists 
a strictly positive vector ${\bf z}\in {\mathbb R}^n$ such that ${\bf z}^\top (A-D) \ll 0.$ By the symmetry of $D-A$, this inequality is equivalent to  $(D-A){\bf z} \gg 0.$
\\
ii)\ As $A-D$ is Metzler Hurwitz, then $(A -D)^{-1}$ is a matrix with nonpositive entries 
\cite{Berman-Plemmons}
and hence $(D-A)^{-1} \ge 0$. The fact that the inverse of a    nonsingular symmetric matrix is symmetric is an elementary algebraic result.
\end{proof}

\begin{lemma}\label{Dlarge_enough}Given a scalar $\varepsilon >0$ and matrices $A\in {\mathbb R}^{n\times n}, B\in{\mathbb R}^{n\times m}$ and $ C\in {\mathbb R}^{m \times n}$, with $A$   Metzler and symmetric, 
it is always possible to choose a   diagonal matrix $D\in {\mathbb R}^{n \times n}$ such that
\begin{itemize}
\item[1)]  $(D - A) {\bf 1}_n\gg {\bf 1}_n$;
\item[2)] $C(D-A)^{-1}B$ is a  matrix  
 whose entries satisfy
$| [C(D-A)^{-1}B]_{i,j}|<\varepsilon$, {\color{black} $\forall\ i,j\in [1,m]$}.
\end{itemize}
\end{lemma}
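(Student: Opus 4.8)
The plan is to take $D$ to be a scalar multiple of the identity, $D = d\,I_n$, and to exhibit a single sufficiently large $d>0$ that meets both requirements at once. The guiding idea is that enlarging the common diagonal entry simultaneously pushes the row sums of $D-A$ upward (which yields item 1) and shrinks the resolvent $(D-A)^{-1}$ toward the zero matrix (which yields item 2).

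For item 1) I would simply compute $(d I_n - A){\bf 1}_n = d\,{\bf 1}_n - A{\bf 1}_n$. Since $A$ is fixed, the vector $A{\bf 1}_n$ has finite entries, so choosing $d > 1 + \max_{i\in[1,n]} [A{\bf 1}_n]_i$ forces every entry of $(dI_n - A){\bf 1}_n$ to exceed $1$, that is, $(dI_n-A){\bf 1}_n \gg {\bf 1}_n$.

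For item 2) I would exploit the symmetry of $A$: its spectrum is real, so for any $d > \lambda_{\max}(A)$ the matrix $dI_n - A$ is positive definite (hence invertible), and in the spectral norm $\|(dI_n - A)^{-1}\| = 1/(d - \lambda_{\max}(A))$. Because every entry of a matrix is bounded in absolute value by its spectral norm, I obtain $|[C(dI_n - A)^{-1}B]_{i,j}| \le \|C\|\,\|B\|/(d - \lambda_{\max}(A))$ for all $i,j\in[1,m]$. The right-hand side tends to $0$ as $d \to \infty$, so it falls below $\varepsilon$ as soon as $d > \lambda_{\max}(A) + \|C\|\,\|B\|/\varepsilon$. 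Equivalently, one may invoke the Neumann series $(dI_n - A)^{-1} = d^{-1}\sum_{\ell\ge 0}(A/d)^\ell$, valid for $d>\|A\|$, which gives the same $O(1/d)$ decay of the entries.

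Finally, choosing $d$ larger than the maximum of the two thresholds above makes items 1) and 2) hold simultaneously, and setting $D = d\,I_n$ completes the argument. I do not anticipate a genuine obstacle: the statement is a quantitative ``dominant diagonal'' fact, and the only care needed is to produce an explicit resolvent bound and then take the larger of the two thresholds so that the \emph{same} $D$ satisfies both conditions. It is worth noting that the Metzler hypothesis on $A$ is not actually required for this lemma; symmetry alone suffices to yield the clean spectral-norm estimate used above.
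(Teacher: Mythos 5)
Your proof is correct, but it takes a genuinely different route from the paper's. The paper does not choose $D$ as a scalar matrix: it writes $D-A=\delta I_n+\bar{\mathcal L}$, where $\bar{\mathcal L}$ is the Laplacian of the nonnegative matrix $\bar A:=A-{\rm diag}\{[A]_{1,1},\dots,[A]_{n,n}\}$ --- this is precisely where the Metzler hypothesis enters, since it guarantees $\bar A\ge 0$ and hence $\bar{\mathcal L}\succeq 0$ --- and then controls $(\delta I_n+\bar{\mathcal L})^{-1}$ through a Neumann series combined with the orthogonal diagonalization $T^\top\bar{\mathcal L}T={\rm diag}\{\delta_1,\dots,\delta_n\}$, arriving at an entrywise bound of the form $|[C(D-A)^{-1}B]_{i,j}|\le (n-1)\psi/\delta$ with $\psi$ the largest modulus of the entries of the rank-one matrices built from the columns of $CT$ and the rows of $T^\top B$. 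You instead set $D=d\,I_n$ and use the resolvent estimate $\|(dI_n-A)^{-1}\|_2=1/(d-\lambda_{\max}(A))$ together with the fact that every entry of a matrix is dominated by its spectral norm; this is shorter, yields explicit thresholds for $d$, and, as you observe, makes the Metzler hypothesis (indeed, even the symmetry, if one uses $\|(dI_n-A)^{-1}\|\le 1/(d-\|A\|)$ for $d>\|A\|$) superfluous for the lemma as stated. What the paper's decomposition buys is a choice of $D$ for which the row sums of $D-A$ are exactly $\delta$ (since $\bar{\mathcal L}{\bf 1}_n={\bf 0}_n$), i.e., a construction adapted to the row-sum/homogeneity bookkeeping of Theorem \ref{teok}; on the other hand, your scalar choice is at least as compatible with the way the lemma is actually invoked there, where the design matrices are ${\mathcal D}_i=\delta_i I_{n_i}$, so nothing is lost by your simplification.
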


\begin{proof}
Keeping in mind the definition of Laplacian associated with the nonnegative matrix   $\bar A := A-{\rm diag}\{[A]_{1,1},[A]_{2,2}, \dots, [A]_{n,n}\}$, it turns out that   a   diagonal matrix $D\in {\mathbb R}^{n \times n}$  is such that 1) holds if and only if
$$D-A = \Delta + \bar {\mathcal L},$$
  namely $D = \Delta + \bar {\mathcal L} +A,$
where $\bar {\mathcal L}$ is the (symmetric) Laplacian associated with  $\bar A$ and $\Delta$ is a diagonal matrix,
 with positive diagonal entries.
We assume $\Delta = \delta I_n, \delta > 0$.
We want to prove that it is always possible to choose $\delta> 0$ so that 
 $| [C(D-A)^{-1}B]_{i,j}| <\varepsilon$, for every $i,j\in [1,m]$.\\
We note that
$$(\delta I_n + \bar {\mathcal L})^{-1} = \frac{1}{\delta} \left( I_n + \frac{1}{\delta}\bar {\mathcal L}\right)^{-1} = \sum_{t=0}^{+\infty} (-1)^t \frac{1}{\delta^{t+1}} \bar {\mathcal L}^t,$$
and hence
$$C(D-A)^{-1}B  = 
 \sum_{t=0}^{+\infty} (-1)^t \frac{1}{\delta^{t+1}} C\bar {\mathcal L}^t B.
$$
Let $T$ be an orthonormal matrix such that 
$T^\top \bar {\mathcal L} T= {\rm diag}\{\delta_1, \delta_2, \dots, \delta_n\},$ with $\delta_1 \ge \delta_2 \ge \dots \ge \delta_n$. Since $\bar {\mathcal L}$ is positive semidefinite (and singular)
the $\delta_i$'s are nonnegative and $\delta_n=0$.
Let ${\bf c}_i$ be the $i$-th column of $CT$ and ${\bf b}_i^\top$ the $i$-th row of $T^\top B$, then 
$$\begin{array}{l}
C(D-A)^{-1}B   =
{\displaystyle  \sum_{t=0}^{+\infty} (-1)^t \frac{1}{\delta^{t+1}} \sum_{i=1}^{n-1} {\bf c}_i (\delta_i)^t {\bf b}_i^\top}   \\
{\displaystyle = \frac{1}{\delta} \sum_{i=1}^{n-1} {\bf c}_i  {\bf b}_i^\top   \left[\sum_{t=0}^{+\infty}  \left(-\frac{\delta_i}{\delta}\right)^t\right]= 
\sum_{i=1}^{n-1} {\bf c}_i  {\bf b}_i^\top   \frac{1}{\delta +\delta_i},}  \end{array}
$$
therefore  for every $i,j\in [1,m],$
$$|[C(D-A)^{-1}B]_{i,j}|
  \le \frac{(n-1) \cdot \psi}{\delta},
$$
where 
$$\psi := \max_{\mycom {i\in[1,n-1]} {h, k\in [1, m]}} | [{\bf c}_i {\bf b}_i^\top]_{h,k}  |.$$
Therefore by imposing that $\frac{(n-1) \cdot \psi}{\delta} \ll \varepsilon$,  namely, $\delta \gg \frac{(n-1) \cdot \psi}{\varepsilon}$,
we ensure that 2) holds.
\end{proof}
\medskip

\bibliographystyle{plain}
\bibliography{Refer1556}

\end{document}